\documentclass[fleqn,10pt]{wlscirep}
\usepackage[utf8]{inputenc}
\usepackage[T1]{fontenc}

\usepackage{amsthm}
\usepackage{graphicx} 
\usepackage{algpseudocode,algorithm,algorithmicx}
\usepackage{hyperref}

\usepackage{pifont}
\usepackage{amsmath}

\usepackage{centernot}
\theoremstyle{definition}

\theoremstyle{theorem}
\newtheorem*{theorem}{Theorem}

\theoremstyle{remark}

\title{Fuzzy $k$-anonymity in complex networks}

\author[1,2,*]{Rachel G. de Jong}
\author[2, 1]{Mark P. J. van der Loo}
\author[1]{Frank W. Takes}
\affil[1]{Leiden University, LIACS, 2333 CA Leiden, The Netherlands} 
\affil[2]{Statistics Netherlands, Research and Development, 2492JP The Hague, The Netherlands}
\affil[*]{r.g.de.jong@liacs.leidenuniv.nl}

\begin{abstract}
With the introduction of large-scale network data, including population-scale social networks, techniques for privacy-aware sharing of network data become increasingly important. 
While existing $k$-anonymity approaches can model different attacker scenarios, they typically assume that attacker knowledge exactly matches the published network structure.
We argue that exact knowledge is often unrealistic and introduce $\phi$-$k$-anonymity, a fuzzy variant of $k$-anonymity in which parameter $\phi$ captures the level of uncertainty in attacker knowledge. 
Across a benchmark of 39 real-world networks, a realistic level of uncertainty ($\phi=5\%$) renders, on average, $64\%$ of previously unique nodes anonymous. 
To further enhance anonymity, we apply anonymization algorithms under a 5\% edge modification budget.
While full anonymization is often unattainable under exact $k$-anonymity, with low uncertainty ($\phi=10\%$) our newly proposed \textsc{greedy} algorithm anonymizes over 99\% of the nodes.
Uncertainty also enables effective anonymization in otherwise difficult to anonymize dense synthetic graphs.
Additionally, data utility in terms of structural properties and performance on network analysis tasks is well preserved, with most metrics changing less than 5\%.
Overall, our findings suggest that modest uncertainty assumptions yield high levels of anonymity and utility, motivating further research on uncertainty-aware privacy guarantees for network data.
\end{abstract}

\begin{document}

\flushbottom
\maketitle

\thispagestyle{empty}

\section*{Introduction}\label{sec:intro}
Network science is a growing field with a variety of applications in a wide range of domains.
In social networks this includes, for example, modeling influence~\cite{li2018influence} or epidemic disease spread,~\cite{azizi2020epidemics} and measuring social segregation.~\cite{kazmina2024socio, bojanowski2014measuring}
For these purposes it is desirable to have access to real-world networks describing meaningful relationships between actual people.
A notable recent example is the increasing interest in population-scale networks based on administrative data.~\cite{bokanyi2023anatomy, panayiotou2025anatomy,cremers2025unveiling}
However, such complex networks may contain sensitive information about individuals, and sharing or publishing this network data may lead to a breach in privacy.
Even after pseudonymization, i.e., removing unique identifiers such as names, there is still an identification risk as it is known that entities can still be uniquely identified based on the network structure.~\cite{hay2008resisting, romanini2021privacy, dejong2023effect}

A commonly used approach to measure identification risk, $k$-anonymity, was initially introduced in the field of statistical disclosure control for tabular data describing entities and their attributes.~\cite{duncan_disclosure, lambert1993measures}
Unlike differential privacy, another commonly used approach for sharing data,~\cite{jiang2021applications} $k$-anonymity allows one to share a perturbed anonymized version of the network, while guaranteeing privacy constraints on the node level and hence protecting against identification risk.
When applied to network data, $k$-anonymity can be used to assess whether a given graph is $k$-anonymous.~\cite{hay2008resisting, romanini2021privacy, dejong2023effect}
A network satisfies $k$-anonymity if for each node in the graph there are at least $k-1$ other nodes with the same \textit{signature} according to a specified \textit{measure} describing the structure surrounding a node. 
Commonly used examples of such anonymity measures are the degree of the node~\cite{liu2008towards} or the exact graph structure surrounding a node.~\cite{dejong2023algorithms, zhou2008preserving}
Real-world networks are often not anonymous, and an anonymization algorithm should be applied to perturb the network and ensure that the resulting (anonymized) network satisfies $k$-anonymity.~\cite{de2026anonymization, bonello2025ga, arsene2026simulated}

Each of the anonymity measures models a different scenario where a potential attacker, who tries to obtain sensitive information, is assumed to have access to this type of information.~\cite{dejong2024comparison}
As the choice of measure essentially determines against which attacker scenario one protects, it is important to select it carefully. 
The measure should not be too lenient, or a possible attacker may still be able to identify nodes and extract (sensitive) information. 
At the same time, if the measure is too strict, anonymizing the network typically requires many alterations. As a result, the network structure and performance on network analysis tasks change drastically, making the anonymized network less suitable for further use, i.e., reducing its \emph{data utility}.~\cite{dejong2024comparison} 
Hence, the choice in measure affects the balance between privacy and data utility.
To ensure that the damage to data utility is limited, a budget can be given to the anonymization algorithm, so that it deletes up to a certain fraction of edges, limiting the impact on data utility.~\cite{de2026anonymization}

\begin{figure}[t!]
    \centering
    \includegraphics[width=\textwidth]{figures/example-degtri.png}
    \caption{Exact $k$-Anonymity and fuzzy $\phi$-$k$-anonymity in the ``Copnet FB'' network dataset before and after anonymization. 
    \textbf{a:} Each dot represents the node degree (horizontal axis) and the number of triangles the node is part of (vertical axis) for at least one node, together forming the $(deg, tri)$ signature. 
    Grey dots represent 2-anonymous nodes, red dots represent unique nodes.
    The gray lines indicate the minimum and maximum number of triangles for the given degree.
    \textbf{b:} Fuzzy setting. Yellow, orange and pink nodes are $\phi$-$2$-anonymous for the corresponding $\phi$ values. 
    Rectangles indicate the area to which each center nodes is $\phi$-similar with $\phi=10\%$. 
    \textbf{c:} Node values and $\phi$-$2$-anonymity after applying \textsc{greedy} anonymization with a 5\% budget.
     }
    \label{fig:fuzzy-example}
\end{figure}

A key property of the measures introduced for $k$-anonymity so far in the literature is the assumption that a possible attacker has \textit{exact information}.
For example, an attacker may know the exact degree of a node or its exact surrounding graph structure, precisely matching the information in the considered network.
Hence, the $k$-anonymity approach labels two nodes as \textit{equivalent} or \textit{not equivalent}.
Although this assumption of exact information models a worst-case scenario and with that gives an upper bound on the risk of publishing network data, this assumption is not always realistic.
In particular, in many scenarios it is in fact very likely that the attacker does not have exact information; even in real-world data collection it is common to have noisy or, by the time analysis takes place, outdated information. 
For example, consider a pseudonymized Facebook friendship network of a city.
Even when an attacker relies on the degree of the node to identify individuals, substantial uncertainty remains.
First, Facebook data are highly dynamic: users add and remove connections, and accounts appear and disappear over time.
Hence, data from a snapshot taken at a certain time likely differs from the network of today.
There may also be uncertainty about who is included in the dataset: users may not update their location, or maintain fake accounts. 
Uncertainty is amplified when attackers rely on external sources rather than direct platform access, as the amount of information that can be obtained about a person using open data sources varies substantially.~\cite{deVries2023risk}

In this work, to account for different levels of uncertainty in the attacker scenario, we introduce $\phi$-$k$-anonymity, a fuzzy notion of $k$-anonymity.
This notion requires both an anonymity measure to determine equivalence of nodes and a value $\phi$, which acts as a proxy for the level of uncertainty.
To operationalize uncertainty in a computationally tractable yet expressive way, we adopt the $(deg, tri)$ anonymity measure.
This measure is similar to a measure introduced in previous work,~\cite{dejong2024comparison} where node signatures consist of the number of connections and edges in its direct neighborhood, and is directly interpretable in scenarios with attacker uncertainty.
The $(deg, tri)$ measure determines for each node the degree $deg$ and the number of triangles $tri$ it is part of.
This measure is positioned in between the \textsc{degree} measure, which yields very low anonymity and represents a very weak attacker scenario~\cite{dejong2024comparison} and \textsc{$d$-$k$-anonymity},~\cite{dejong2023algorithms} a strict measure assuming complete knowledge of a node's surrounding structure.
For the \textsc{degree} measure, anonymization with as few alterations as possible can be done in polynomial time, while no efficient exact algorithms exist for the more complex measures.~\cite{dejong2024comparison}
Computing similarity between nodes for the latter measure would require the NP-hard to compute graph edit distance.~\cite{blumenthal2020exact}
Notably, it was found that in real-world networks $(deg, tri)$ often measures anonymity levels comparable to \textsc{$d$-$k$-anonymity},~\cite{dejong2024comparison} making it a promising choice for modeling uncertainty while avoiding computational bottlenecks of more complex measures.
A more elaborate discussion of the chosen measure's positioning in the literature and further rationale is provided in Supplementary Information.

Given the notion of $\phi$-$k$-anonymity under the $(deg, tri)$ measure, a node $v$ is $\phi$-similar to a node $w$ if the relative difference in both  degree ($\Delta deg/deg$) and number of triangles ($\Delta tri / tri$) are at most $\phi$. 
Hence, the differences allowed for a node to be $\phi$-similar to another are relative to the node's values.
This reflects the scenario in which an attacker is more certain about the structure of nodes with small neighborhoods, where a few additional edges result in a relatively large difference.
Conversely, we assume that an attacker is less certain about nodes with large neighborhoods, for which missing or extra edges have a smaller relative impact. 
As a result, if a node $v$ is $\phi$-similar to a node $w$, it does not necessarily follow that $w$ is $\phi$-similar to $v$, making $\phi$-similarity a non-symmetric relation.
This behavior is illustrated by the rectangles in Fig.~\ref{fig:fuzzy-example}\textbf{b} and \textbf{c}.

\begin{figure}[t!]
    \centering
    \includegraphics[width=0.85\textwidth]{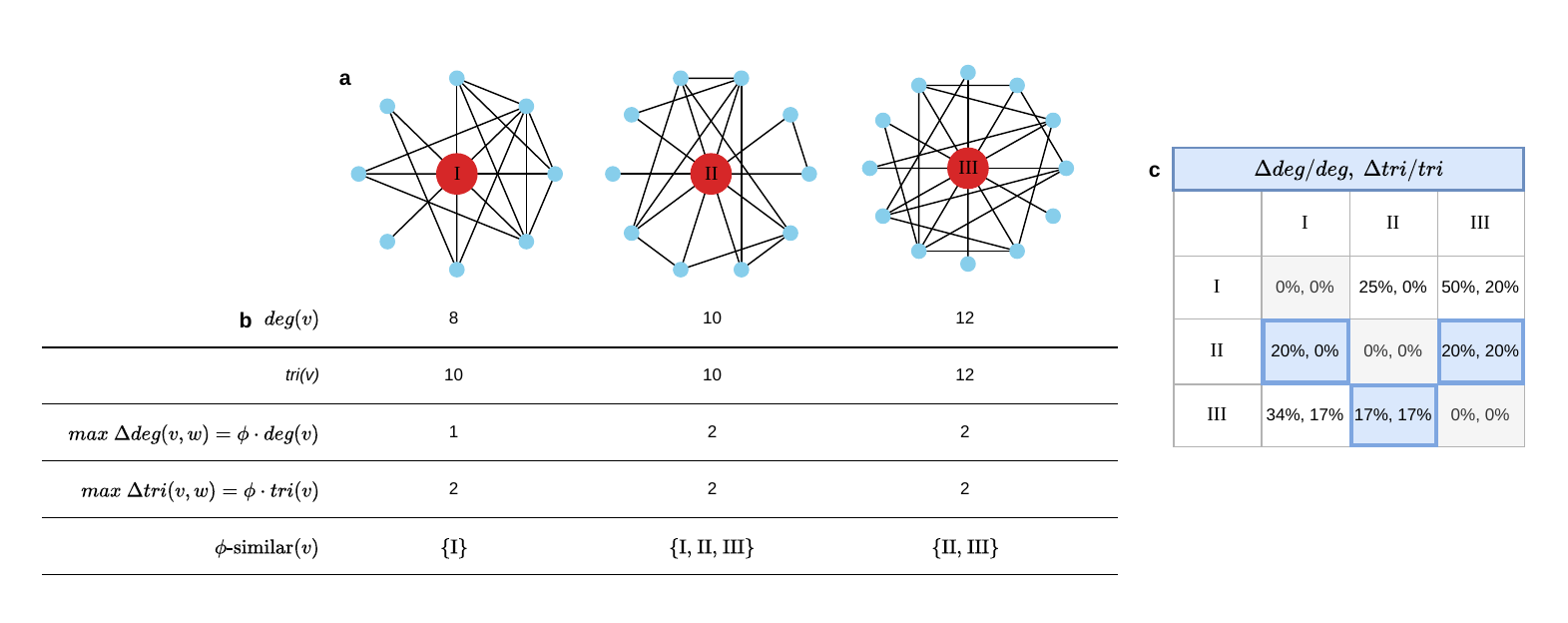}
    \caption{Example illustrating $\phi$-$k$-anonymity with $\phi=20\%$. \textbf{a:} Ego network of the red nodes labeled I, II and III. \textbf{b:} Table with each node's degree $deg(v)$, triangle count $tri(v)$, maximum allowed difference in $deg$ and $tri$ and the set of nodes to which it is $\phi$-similar. 
    \textbf{c:} Relative difference in $deg$ and $tri$ values from one node (row) to another node (column). Blue cells indicate that $v$ (row) is $\phi$-similar to $w$ (column). Grey cells (0\% difference) indicate identical signatures.}
    \label{fig:example}
\end{figure}

We demonstrate the workings of $\phi$-$k$-anonymity with $\phi=20\%$ using the example in Fig.~\ref{fig:example}.
As illustrated in Fig.~\ref{fig:example}\textbf{b}, for node II, with degree 10 ($deg=10)$ and 10 triangles ($tri=10$),
$\phi$-similar nodes can differ by at most 2 in both $deg$ and $tri$.
Figure~\ref{fig:example}\textbf{c} shows the percentage difference $\Delta deg,\ \Delta tri$ to other nodes.
If a node differs by no more than $\phi=20\%$ in both degree and number of triangles, it is $\phi$-similar to the considered node. 
Using this criterion, node II is $\phi$-similar to nodes I, III, and itself.
At the same time, node I is not $\phi$-similar to node II because the degrees of all other nodes differ by more than $20\%$. 
Moreover, node I is not considered $\phi$-similar to any other node in the example  as all degree differences exceed the threshold, again showing that $\phi$-similarity is not a symmetric relation.

In the experimental part of this paper, we investigate how varying levels of uncertainty in the attacker's knowledge affect both identification risk and the performance of anonymization algorithms under a modification budget of 5\% of the edges.
We evaluate our framework for fuzzy $k$-anonymity on both synthetic graph models and a benchmark dataset with 39 real-world networks from diverse domains and with varying structural properties.
In addition, we assess how well utility is preserved under different uncertainty levels and compare these outcomes to the standard non-fuzzy setting where $\phi=0\%$. 

Our results demonstrate that even small levels of uncertainty substantially increase anonymity.
In graph models with varying densities, allowing $\phi=5\%$ renders the majority of nodes anonymous.
Across the 39 real-world networks, introducing $\phi=5\%$ makes, on average, 64\% of the nodes that were unique under exact knowledge, anonymous.
Uncertainty also substantially improves the effectiveness of anonymization algorithms: even settings in which dense synthetic graphs cannot be anonymized under $\phi=0\%$ become tractable once uncertainty is incorporated.
Moreover, with just $\phi=10\%$, the \textsc{greedy} algorithm anonymizes nearly all nodes in the real-world benchmark networks.

Overall, our findings suggest that when modest deviations in attacker knowledge are realistic, identification risks may be substantially overestimated by non-fuzzy measures.
Allowing limited structural variation therefore makes it attainable to close to fully anonymize networks while preserving high utility.
Together, these results challenge the default assumption of exact attacker knowledge and motivate further research on uncertainty-aware privacy guarantees for network data.

The remainder of this paper is structured as follows.
The~\hyperref[sec:results]{Results} section presents experimental results on synthetic and real-world networks using $\phi$-$k$-anonymity.
We conclude our paper in the~\hyperref[sec:discussion]{Discussion} section with a summary and directions for future work.
Lastly, the~\hyperref[sec:methods]{Methods} section contains a more elaborate description of methods and experimental setup used.

\section*{Results}\label{sec:results}
In this section, we use $\phi$-$k$-anonymity to evaluate how accounting for uncertainty in attacker knowledge affects network anonymity and the performance of anonymization algorithms on synthetic graph models and real-world networks.
We consider three synthetic graph models: (1) the Erdős–Rényi model (ER),~\cite{erdos1960evolution} in which each edge exists independently with equal probability, (2) the Barabási–Albert model (BA),~\cite{barabasi1999emergence} which produces graphs with a power-law degree distribution, and (3) the Watts–Strogatz model (WS),~\cite{watts1998collective} with rewiring probability $0.05$.
The real-world network datasets used, along with descriptions of their structural properties, are listed in Table~\ref{tab:data}.
The chosen networks span social, human communication and collaboration networks from varying sources and with diverse sizes and topological properties.

\begin{table}[b!]
\scriptsize
\centering
\begin{tabular}{@{}rrrrrrr@{}}
\toprule
Network             & Nr. nodes & Nr. edges & \begin{tabular}[c]{@{}r@{}}Average\\ clustering\\coefficient\end{tabular} & \begin{tabular}[c]{@{}r@{}}Average \\ shortest\\ path\\ length\end{tabular} & \begin{tabular}[c]{@{}r@{}}Frac. \\ nodes in\\ LCC\end{tabular} & Modularity \\ \midrule
Radoslaw emails~\cite{kunegis2013konect}       & 167       & 3,250     & 0.69 & 1.97  & 1.00 & 0.13 \\
Moreno innov.~\cite{kunegis2013konect}       & 241       & 923       & 0.31 & 2.47  & 0.49 & 0.69 \\
Primary school~\cite{stehle2011high}      & 242       & 8,317     & 0.53 & 1.73  & 1.00 & 0.28 \\
Gene fusion~\cite{kunegis2013konect}         & 291       & 279       & 0.00 & 3.90  & 0.38 & 0.87 \\
Copnet calls~\cite{sapiezynski2019copenhagen}        & 536       & 621       & 0.25 & 7.37  & 0.65 & 0.87 \\
Copnet sms~\cite{sapiezynski2019copenhagen}          & 568       & 697       & 0.22 & 7.32  & 0.80 & 0.84 \\
Copnet FB~\cite{sapiezynski2019copenhagen}           & 800       & 6,418     & 0.32 & 2.98  & 1.00 & 0.47 \\
FB Reed98~\cite{networksrepository}           & 962       & 18,812    & 0.33 & 2.46  & 1.00 & 0.33 \\
Arenas email~\cite{kunegis2013konect}        & 1,133     & 5,451     & 0.25 & 3.61  & 1.00 & 0.58 \\
Euroroads~\cite{kunegis2013konect}            & 1,174     & 1,417     & 0.02 & 18.37 & 0.89 & 0.88 \\
Air traffic control~\cite{kunegis2013konect} & 1,226     & 2,408     & 0.07 & 5.93  & 1.00 & 0.71 \\
Network science~\cite{kunegis2013konect}     & 1,461     & 2,742     & 0.88 & 5.82  & 0.26 & 0.96 \\
FB Simmons81~\cite{networksrepository}        & 1,518     & 32,988    & 0.33 & 2.57  & 0.99 & 0.48 \\
DNC emails~\cite{kunegis2013konect}           & 1,866     & 4,384     & 0.59 & 3.37  & 0.98 & 0.59 \\
Moreno health~\cite{kunegis2013konect}       & 2,539     & 10,455    & 0.15 & 4.56  & 1.00 & 0.64 \\
FB Wellesley22~\cite{networksrepository}      & 2,970     & 94,899    & 0.27 & 2.59  & 1.00 & 0.38 \\
Bitcoin alpha~\cite{networksrepository}       & 3,783     & 14,124    & 0.28 & 3.57  & 1.00 & 0.48 \\
US power grid~\cite{kunegis2013konect}        & 4,941     & 6,594     & 0.11 & 18.99 & 1.00 & 0.94 \\
GRQC collab.~\cite{snapnets}         & 5,241     & 14,484    & 0.69 & 6.05  & 0.79 & 0.87 \\
FB Carnegie49~\cite{networksrepository}       & 6,637     & 249,967   & 0.29 & 2.74  & 1.00 & 0.43 \\
Pajek Erdős~\cite{kunegis2013konect}         & 6,927     & 11,850    & 0.40 & 3.78  & 1.00 & 0.71 \\
ChG-miner~\cite{biosnapnets}            & 7,341     & 15,138    & 0.00 & 6.15  & 0.90 & 0.76 \\
DG assoc.~\cite{biosnapnets}           & 7,813     & 21,357    & 0.00 & 4.23  & 1.00 & 0.53 \\
FB GWU54~\cite{networksrepository}            & 12,193    & 469,528   & 0.22 & 2.83  & 1.00 & 0.46 \\
Anybeat~\cite{networksrepository}              & 12,645    & 49,132    & 0.40 & 3.17  & 1.00 & 0.44 \\
CE-CX~\cite{networksrepository}               & 15,229    & 245,952   & 0.23 & 3.85  & 0.99 & 0.62 \\
Astrophysics~\cite{networksrepository}        & 18,771    & 198,050   & 0.68 & 4.19  & 0.95 & 0.64 \\
FB BU10~\cite{networksrepository}              & 19,700    & 637,528   & 0.20 & 3.03  & 1.00 & 0.47 \\
FB Uillinois20~\cite{networksrepository}       & 30,809    & 1,264,428 & 0.22 & 2.99  & 1.00 & 0.48 \\
Enron email~\cite{snapnets}          & 36,692    & 183,831   & 0.72 & 4.03  & 0.92 & 0.62 \\
FB Penn94~\cite{networksrepository}            & 41,536    & 1,362,220 & 0.22 & 3.12  & 1.00 & 0.49 \\
FB wall 2009~\cite{kunegis2013konect}         & 45,813    & 183,412   & 0.15 & 5.60  & 0.96 & 0.66 \\
Brightkite~\cite{networksrepository}          & 58,228    & 214,078   & 0.27 & 4.92  & 0.97 & 0.69 \\
The marker cafe~\cite{dataforgoodlab}     & 69,413    & 1,644,843 & 0.24 & 3.06  & 1.00 & 0.29 \\
Slashdot zoo~\cite{kunegis2013konect}         & 79,116    & 467,731   & 0.09 & 4.04  & 1.00 & 0.35 \\
Twitter~\cite{kunegis2013konect}             & 465,017   & 833,540   & 0.06 & 4.59  & 1.00 & 0.69 \\
DBLP~\cite{kunegis2013konect}                    & 1,824,701 & 8,344,615 & 0.73 & 5.74  & 0.91 & 0.76 \\
Flixster~\cite{kunegis2013konect}             & 2,523,386 & 7,918,801 & 0.21 & 4.82  & 1.00 & 0.62 \\
Youtube~\cite{kunegis2013konect}              & 3,223,585 & 9,375,374 & 0.17 & 5.29  & 1.00 & 0.67  \\ \bottomrule
\end{tabular}
\caption{Real-world network data used in experiments. From left to right: the number of nodes and edges, average clustering coefficient, average shortest path length, fraction of nodes in the largest connected component (LCC) and modularity, an indicator of the quality of the community structure found by the Leiden~\cite{traag2019louvain} community detection algorithm.
}
\label{tab:data}
\end{table}

Similar to previous work,~\cite{romanini2021privacy, dejong2023effect} we focus on the case where $k=2$, unless otherwise specified.
This allows us to summarize anonymity for a graph as a whole as the fraction of $\phi$-$2$-anonymous nodes or uniqueness, the fraction of nodes that is unique or not $\phi$-$2$-anonymous. 
A node is $\phi$-$k$-anonymous if it is $\phi$-similar to at least $k-1$ other nodes, as defined in Equation~\ref{eq:phisim}.
Here $G = (V, E)$ denotes the network, $V$ the set of its nodes and $\phi \in [0, 1]$, often indicated as a percentage, the level of allowed uncertainty. 
$deg(v)$ and $tri(v)$ denote the degree and number of triangles incident to a node $v$ and $\Delta deg(v, w)$, $\Delta tri(v, w)$ denote the absolute difference in degree and number of triangles of node $v$ and $w$.
\begin{equation}
    \phi\text{-}similar(v) = \{w \in V : \Delta deg(v, w)/deg(v) \leq \phi \ \wedge \ \Delta tri(v, w)/tri(v) \leq \phi \}
    \label{eq:phisim}
\end{equation}

For anonymization, we use two algorithms from recent literature and one newly proposed algorithm, with the aim to assess how accounting for uncertainty improves the anonymization process.
The first algorithm, Edge Sampling (\textsc{es}), serves as a baseline algorithm as it removes edges at random.~\cite{romanini2021privacy}
Second, Unique Affected (\textsc{ua}) is a heuristic algorithm that is more likely to remove edges that affect many unique nodes, i.e., edges whose removal causes nodes to change their signature.
\textsc{ua} is computationally more expensive than \textsc{es}, but has been shown to be more effective.~\cite{de2026anonymization}
Third, we introduce the \textsc{greedy} algorithm, which iteratively removes the edges that improve overall network anonymity the most.
While \textsc{ua} heuristically selects edges likely to improve anonymity, \textsc{greedy}, though even more computationally expensive, guarantees the best choice at each step.
To account for utility of the resulting anonymized networks, we consider a budgeted setting~\cite{de2026anonymization} in which we delete at most 5\% of the edges in the network.
Further details on the algorithms and experimental setup are provided in the~\hyperref[sec:methods]{Methods} section.

\begin{figure}[t!]
    \centering
    \includegraphics[width=\textwidth]{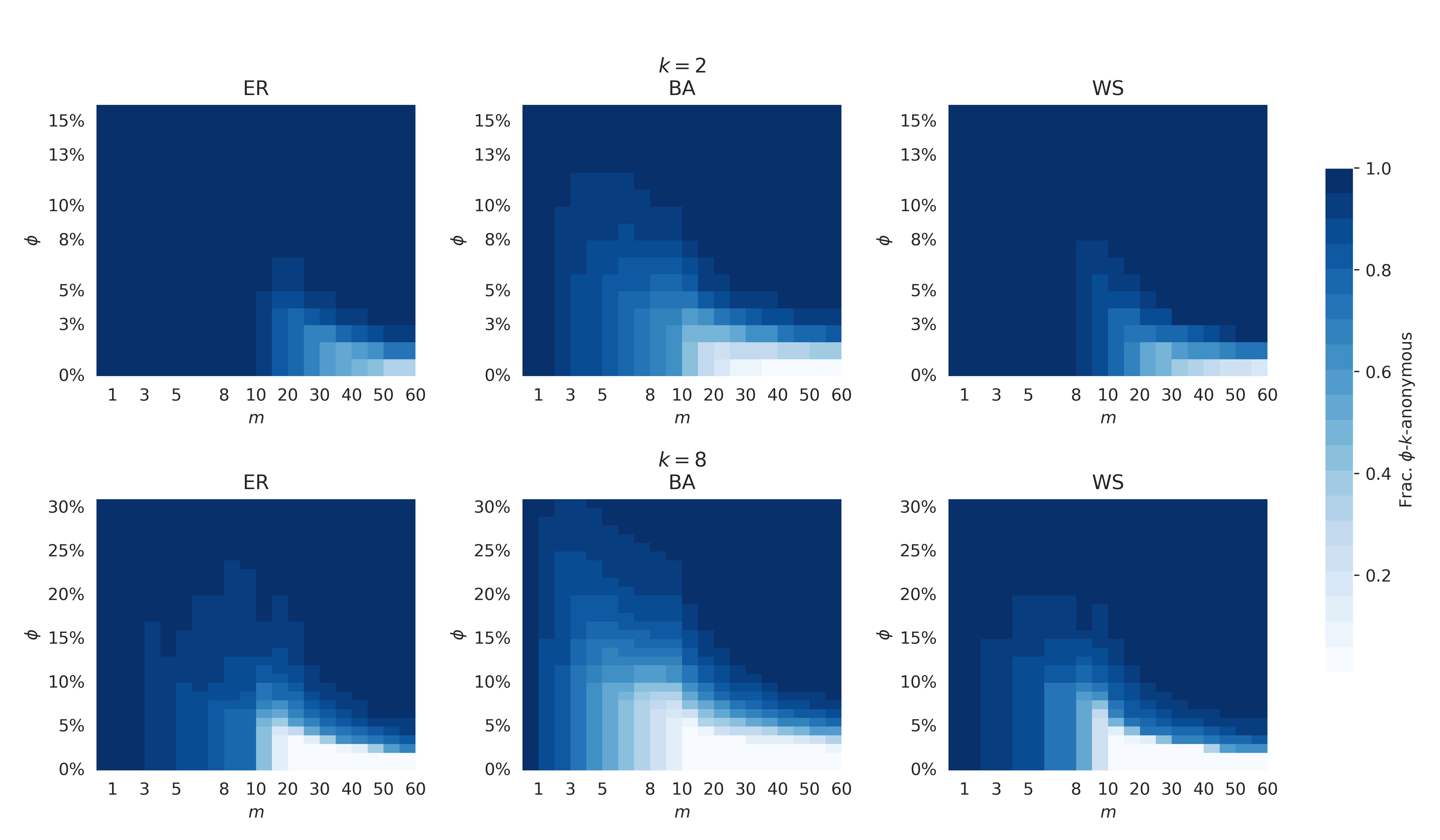}
    \caption{Fuzzy $\phi$-$k$-anonymity for $k=2$ (top) and $k=8$ (bottom) in Erdős–Rényi (ER, left) Barabási–Albert (BA, middle) and Watts Strogatz (WS, right) graph models with 500 nodes.
    The horizontal axis in each subfigure denotes the number of edges per node $m$, the vertical axis the level of uncertainty $\phi$.
    Color indicates the fraction of $\phi$-$k$-anonymous nodes ranging from white, all nodes unique, to dark blue, all nodes are anonymous.}
    \label{fig:fuzzy-models}
\end{figure}

\subsection*{Fuzzy $k$-anonymity in graph models}\label{sub:fuzzymodel}
Figure~\ref{fig:fuzzy-models} shows the fraction of $\phi$-$2$-anonymous nodes (indicated by color) under $\phi$-$k$-anonymity for $k=2$ and $k=8$ across a range of $\phi$ values in three synthetic graph models. 
The bottom row of the matrix in each subfigure, corresponding to $\phi=0\%$, shows, consistent with previous work in the non-fuzzy setting, that the fraction of unique nodes transitions from zero (all nodes anonymous) to one (no nodes anonymous) as the average degree $m$, which serves as a proxy for network density, increases.~\cite{dejong2023effect, romanini2021privacy}
As the average degree increases, neighborhoods become more distinct, which generally increases node uniqueness and hence reduces anonymity. 
At the same time, higher density can also improve $\phi$-$k$-anonymity: as $\phi$ allows differences relative to the node and edge count, higher-degree nodes with more triangles can tolerate larger deviations while still being considered $\phi$-similar to another node.
To show how results generalize to larger synthetic graphs with 1,000 nodes, additional results are included in Supplementary Information.
These results suggest that similar trends hold for larger synthetic graphs, which achieve higher overall anonymity and require less uncertainty to render most nodes anonymous.

The three top subfigures of Fig.~\ref{fig:fuzzy-models}, corresponding to $k=2$, show that unique nodes gradually become anonymous (blue) as $\phi$ increases, in all graph models. 
Even relatively low uncertainty ($\phi=5\%$) is sufficient to make most nodes anonymous for all models and density values.
This suggests that many nodes that are unique when $\phi=0\%$ are still structurally similar to at least one other node in the network.
For the remaining nodes, the transition is more gradual, reflecting larger differences in degree and triangle count.
This transition is fastest for the ER and WS models in which node degrees are less diverse.
For the BA model, in which some nodes have a substantially larger degree or triangle count than others as a result of the power-law degree distribution, the transition is slower and more gradual.
In denser graphs, as indicated by $m$, the value of $\phi$ required to make most nodes anonymous initially increases, because the structure surrounding the nodes becomes more distinct.
At even higher densities, the required $\phi$ value decreases.
While nodes remain structurally distinctive, the higher node degrees and triangle counts allow greater deviations in node and triangle count.

The case $k=8$, shown in the bottom subfigures of Fig.~\ref{fig:fuzzy-models}, represents a stronger privacy requirement where a node is considered anonymous when it is $\phi$-similar to at least seven other nodes.
The results show that low values of $\phi$ greatly increase anonymity in this setting as well.
In fact, $\phi=10\%$ makes the majority of nodes $8$-anonymous across most graph models and density settings.
Compared to $k=2$, higher levels of uncertainty $\phi$ are required to make most nodes anonymous, and the transition happens more gradually.
The required value of $\phi$ peaks around $m=10$, after which it decreases as larger differences in degree and triangle count are allowed in denser networks.
Hence, even relatively low levels of uncertainty suffice to achieve stronger privacy guarantees, indicated by a higher value of $k$. 
Most importantly, our results on synthetic graphs suggest that even small levels of uncertainty can substantially reduce node uniqueness, even in dense networks.

\subsection*{Fuzzy $k$-anonymity in real-world networks}\label{sub:fuzzyreal}
To assess $\phi$-$k$-anonymity in real-world networks, we focus on four uncertainty levels: no uncertainty ($\phi=0\%$), very low uncertainty ($\phi=1\%$), low uncertainty ($\phi=5\%$), and a higher level of uncertainty ($\phi=10\%$).
The subfigure in the top row of Fig.~\ref{fig:anon:realnws} shows the fraction of $\phi$-$k$-anonymous nodes in the real-world network datasets for each value of $\phi$, indicated by color.
As some networks contain a high fraction of initially anonymous nodes in the non-fuzzy setting ($\phi=0\%$), the bottom subfigure in Fig.~\ref{fig:anon:realnws} shows the fraction of nodes that are unique in the non-fuzzy setting that become anonymous when accounting for the indicated level of uncertainty $\phi$.
A value of 1.0 indicates that all unique nodes become anonymous, and a value of 0.0 indicates no increase in anonymity.

\begin{figure}[t!]
    \centering
    \includegraphics[width=\textwidth]{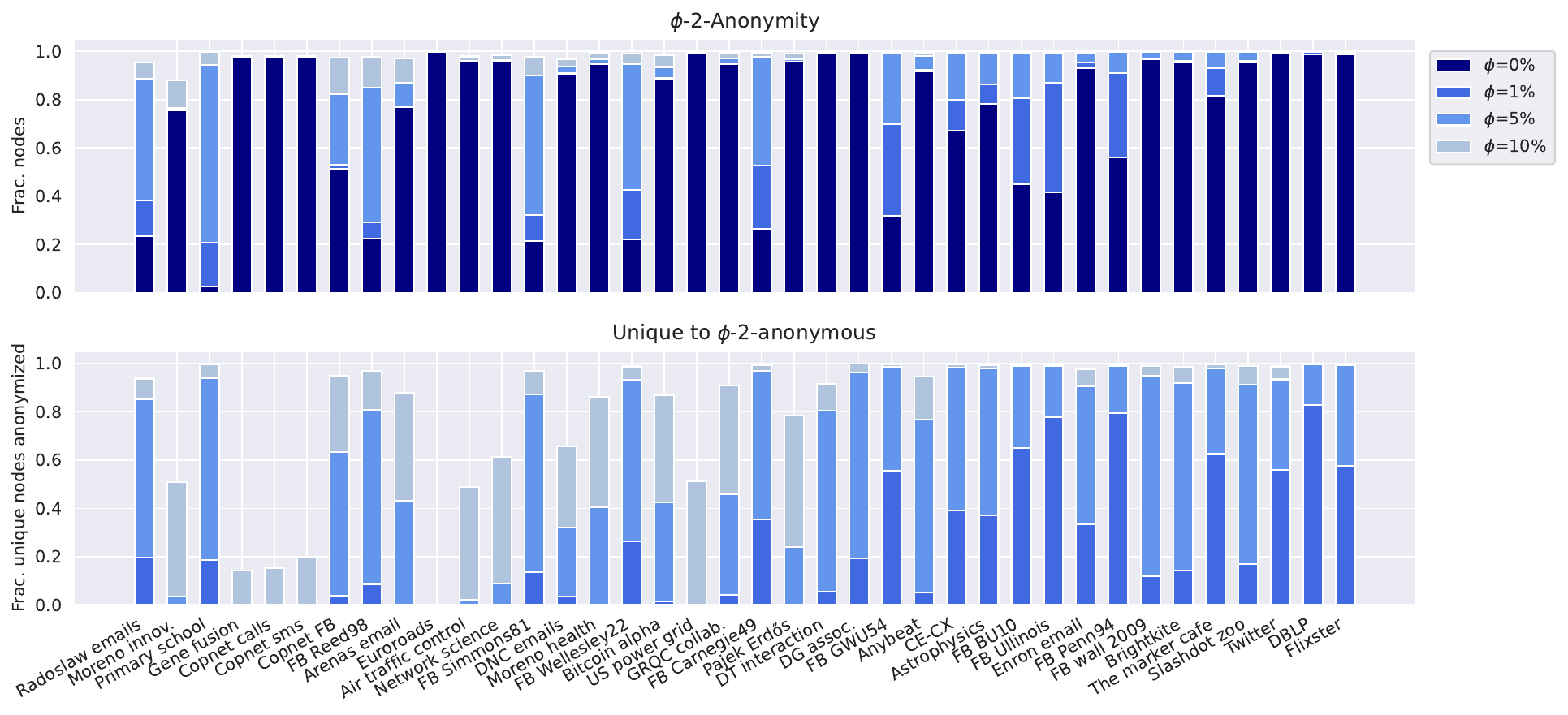}
    \caption{$\phi$-$2$-Anonymity in real-world networks (sorted by their number of nodes). Top: the fraction of $\phi$-$2$-anonymous nodes (vertical axis) in the networks (horizontal axis) for different values of $\phi$, as indicated by color.
    Bottom: Fraction of unique nodes that becomes anonymous when accounting for $\phi\%$ uncertainty (indicated by color) compared to $\phi=0\%$.}
    \label{fig:anon:realnws}
\end{figure}

Overall, we find that $\phi=5\%$ results in 2.5 times more anonymous nodes compared to $\phi=0\%$, rendering 64\% of previously unique nodes anonymous, with at least 76\% of nodes anonymous in all datasets.
With a larger level of uncertainty, $\phi=10\%$, more than 88\% of the nodes are anonymous and, on average, 82\% of the initially unique nodes become anonymous.
While a low level of uncertainty, $\phi=1\%$, has an overall small effect on the anonymity of networks, this still renders on average 22\% of the initially unique nodes anonymous. 
Especially in large networks, the effect is substantial.

The bottom subfigure shows that $\phi=5\%$ substantially increases anonymity for small networks with low initial anonymity, such as ``Radoslaw emails'', ``Primary school'' and ``FB Reed98''. 
For other small networks with many low degree nodes (degree and triangle count distributions can be found in Supplementary Information) anonymity increases only for $\phi=10\%$, indicating that these networks require a higher level of uncertainty to obtain more anonymity.
In contrast, the seventeen largest networks (rightmost networks in Fig.~\ref{fig:anon:realnws}) exhibit a different behavior. 
Although these large networks have high initial anonymity, $\phi=5\%$ is often sufficient to render most of the remaining unique nodes anonymous, achieving results close to full anonymization.

Overall, the findings suggest that incorporating low levels of uncertainty in attacker knowledge can substantially increase node anonymity across a variety of real-world networks.
Building on these insights, we next investigate how the application of anonymization algorithms can further enhance network anonymity.

\subsection*{Fuzzy anonymization in graph models}\label{sub:anonmodel}
In this section, we assess how introducing uncertainty in the measurement of anonymity affects the performance of anonymization algorithms in synthetic graph models.
Given our focus on the budgeted variant of the anonymization problem,~\cite{de2026anonymization} we delete edges with the \textsc{es}, \textsc{ua} and \textsc{greedy} anonymization algorithms until the graph is $\phi$-2-anonymous, or until 5\% of the edges are deleted.
For completeness, Supplementary Information contains results showing how these findings generalize to larger graph models with 1,000 nodes, for which similar trends hold.

Figure~\ref{fig:alg:models} shows anonymity in networks generated using the ER, BA and WS graph models after applying anonymization algorithms.
The colors indicate the anonymization algorithm used, whereas each linestyle indicates the level of uncertainty $\phi$ accounted for.
Across all graph models, even a small level of uncertainty ($\phi=5\%$) substantially reduces the fraction of unique nodes after anonymization.
This improvement is partly because, as shown in Fig.~\ref{fig:fuzzy-models}, accounting for uncertainty already increases anonymity substantially before anonymization.
Although for $\phi=0\%$ we observe that uniqueness increases considerably with network density, this effect is largely mitigated when introducing a low level of uncertainty ($\phi=5\%$).
While $\phi=1\%$ results in uniqueness levels similar to the non-fuzzy setting, $\phi=5\%$ combined with \textsc{greedy} anonymization leads to graphs that are close to fully anonymous across all models and density settings.
In general, introducing small levels of uncertainty makes it feasible to even anonymize dense graphs in which most nodes are unique when $\phi=0\%$,~\cite{romanini2021privacy, dejong2023effect} and standard anonymization algorithms often fail.

\begin{figure}[t!]
    \centering
    \includegraphics[width=\textwidth]{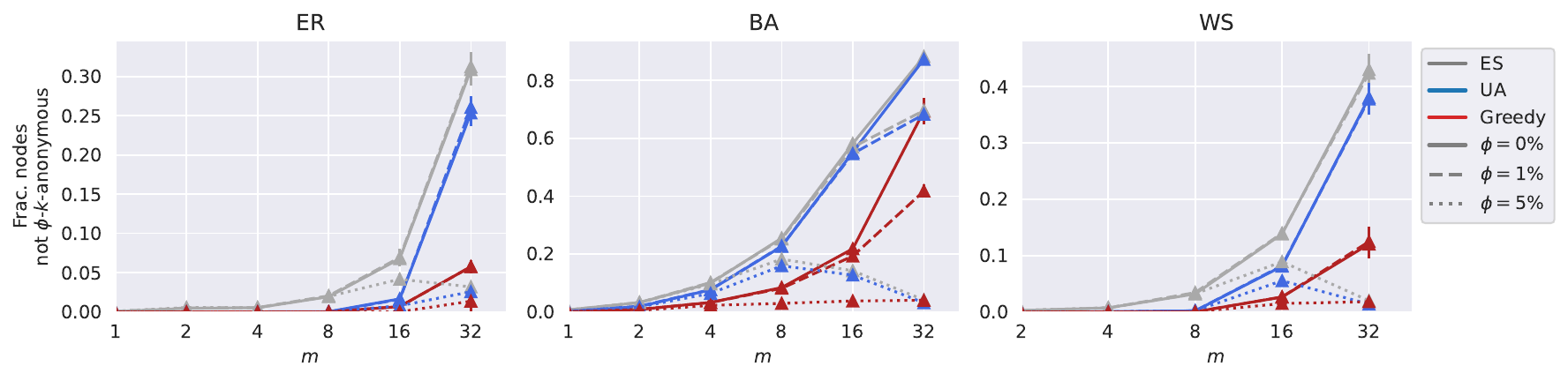}
    \caption{Fuzzy $\phi$-$2$-Anonymity after budgeted anonymization (deleting 5\% of the edges) on graph models ER (left), BA (middle) and WS (right) with 500 nodes and varying density (horizontal axis) using different anonymization algorithms (color) and levels of uncertainty $\phi$ (linestyle). }
    \label{fig:alg:models}
\end{figure}

\begin{figure}[b!]
    \centering
    \includegraphics[width=\textwidth]{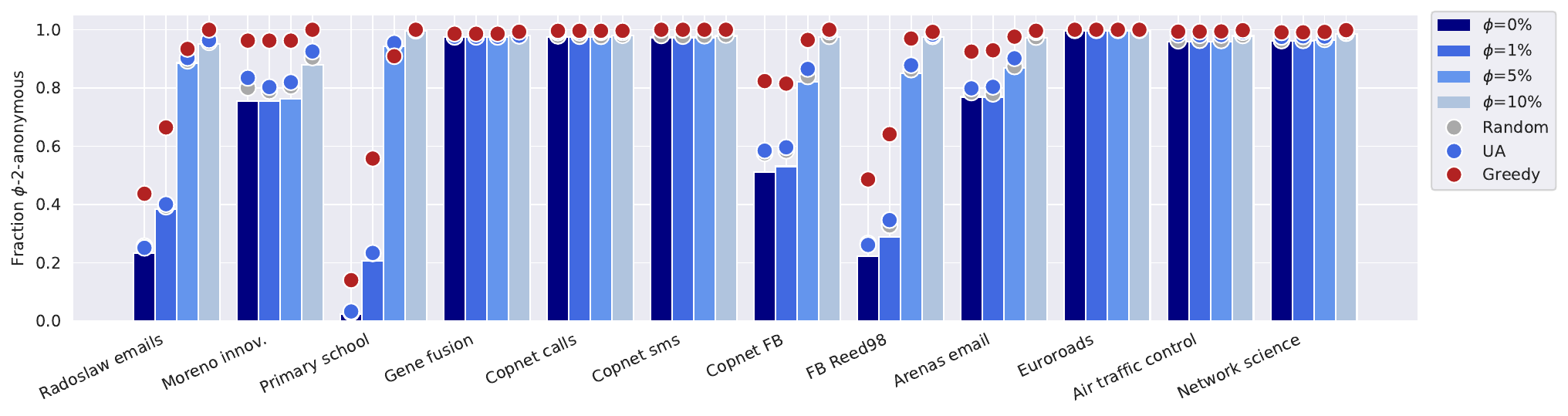}
    \caption{Budgeted anonymization in real-world networks with different levels of uncertainty $\phi$. Each bar indicates the fraction of $\phi$-$2$-anonymous nodes (vertical axis) for the network and corresponding $\phi$ value (color). Each dot denotes the fraction of $\phi$-$2$-anonymous nodes after applying the anonymization algorithm indicated by that color.}
    \label{fig:anonnws}
\end{figure}

\subsection*{Fuzzy anonymization in real-world networks}\label{sub:anonreal}
To assess the effect of anonymization when accounting for uncertainty in real-world networks, Fig.~\ref{fig:anonnws} shows the fraction of $\phi$-$2$-anonymous nodes after applying anonymization algorithms for 12 real-world networks.
Similar to the previous experiment, the anonymization algorithms, \textsc{es}, \textsc{ua} and \textsc{greedy}, delete at most 5\% of the edges for each network.
As anonymity before anonymization differs across the included networks, the fraction of $\phi$-$2$-anonymous nodes before anonymization is shown by the solid bars in Fig.~\ref{fig:anonnws}, while the dots represent the anonymity after anonymization.

Across all settings, the \textsc{greedy} anonymization algorithm is the most effective, yet insufficient to anonymize all nodes in some of the included networks.
For $\phi=0\%$, \textsc{greedy} increases the number of $\phi$-$2$-anonymous nodes by 12\% on average and by up to 30\% for ``Copnet FB''.
Combined with $\phi=5\%$, \textsc{greedy} anonymization makes over 90\% of nodes $\phi$-$2$-anonymous for all included networks.
With $\phi=10\%$, it manages to anonymize close to all nodes (at least 99\%) for the included networks.
These results show that, while the considered algorithms yield limited gains within a 5\% budget, the combination of \textsc{greedy} anonymization with a modest level of uncertainty ($\phi=10\%$) is sufficient to nearly fully anonymize many real-world networks.

\begin{figure}[t!]
    \centering
    \includegraphics[width=0.9\textwidth]{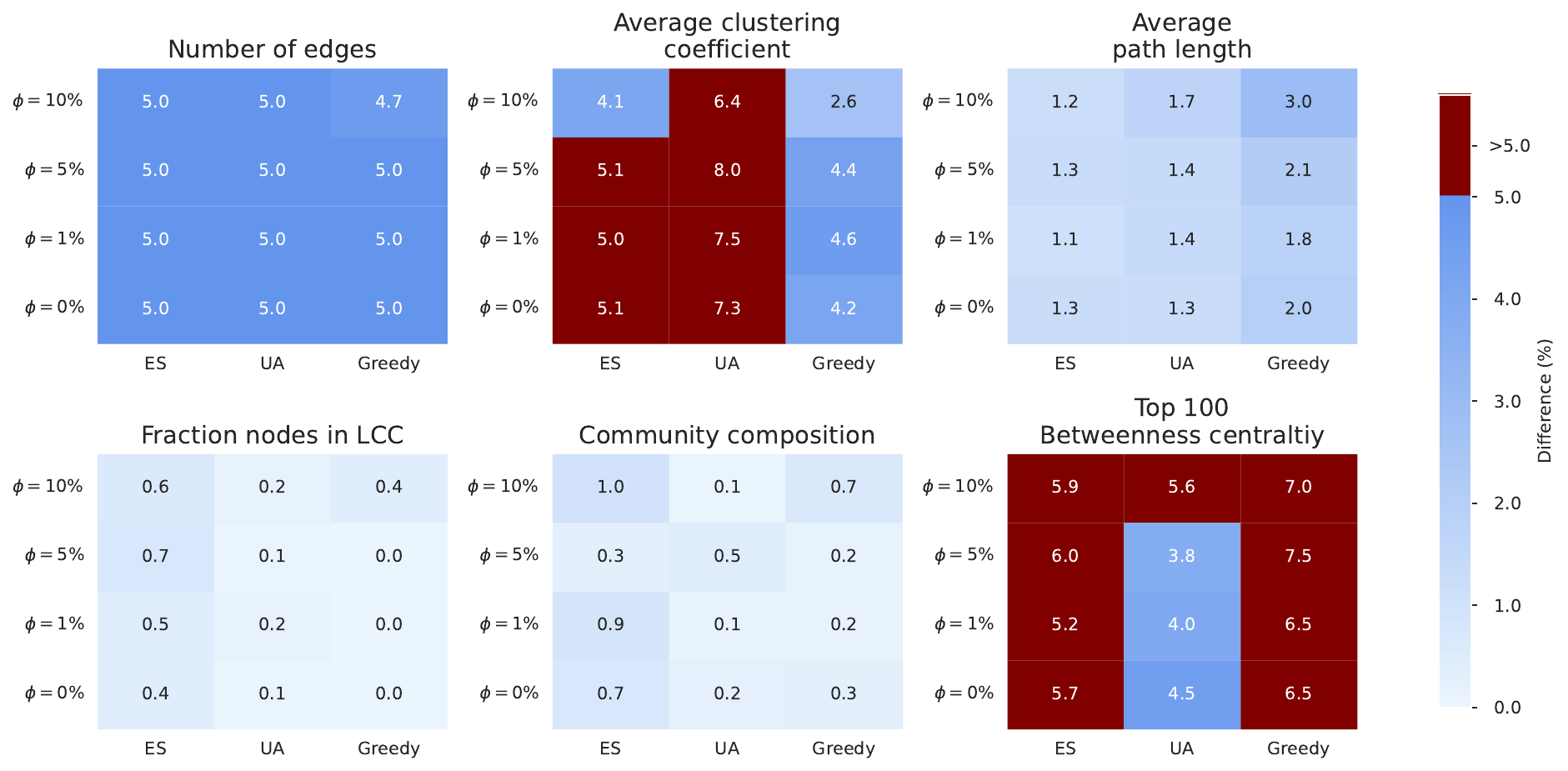}
    \caption{Data utility after budgeted anonymization with the \textsc{es}, \textsc{ua} and \textsc{greedy} anonymization algorithms. Each cell shows the relative difference, compared to the network before anonymization, for one of the six network utility metrics. Results are shown for the three anonymization algorithms (horizontal axis), and four levels of uncertainty $\phi$ (vertical axis). White indicates no difference, while the darker shades of blue indicate larger differences. Red cells indicate differences larger than 5\%.}
    \label{fig:utility}
\end{figure}

\subsection*{Utility of anonymized real-world networks}\label{sub:utility}
While the previous sections show that fuzzy anonymization can substantially improve anonymity within the given budget of edge deletions, it is also important to evaluate how well these algorithms preserve data utility when different levels of uncertainty $\phi$ are considered.
Figure~\ref{fig:utility} shows, for the 12 networks included in the anonymization experiments, how well utility is preserved.
The subfigures in the top row and the bottom left subfigure show the difference in four structural network properties: (1) the number of edges, (2) average clustering coefficient, (3) average path length, and (4) fraction of nodes remaining in the largest connected component.
The two rightmost subfigures in the bottom row show the performance on common network analysis tasks: (5) how well community structure is preserved, and (6) the similarity of the top 100 most central nodes according to betweenness centrality.
If the percentage difference in a metric exceeds 5\%, this is indicated in red to show that an undesirable amount of data utility is lost.
As some networks exhibit outliers, especially in centrality, we report the median across networks.
The~\hyperref[sec:methods]{Methods} section describes how each metric is computed; results for each network separately are included in Supplementary Information.

The top left subfigure, showing the fraction of edges deleted, indicates that all algorithms in all settings, except for \textsc{greedy} with $\phi=10\%$, use the full budget of 5\%, as they do not manage to fully anonymize the graphs within the budget.
Hence, for all $\phi$ values, the fraction of edges deleted often remains the same, while the set of nodes that requires anonymization changes.

Overall, the average path length, largest component size and community structure are well preserved by all anonymization algorithms at all levels of uncertainty.
Clustering and centrality prove to be more challenging to preserve.
For clustering, \textsc{greedy} performs best and \textsc{ua} performs worst, though higher values of $\phi$ generally improve preservation. 
Centrality is particularly difficult to preserve: even random edge deletions by \textsc{es} lead to substantial changes in the top 100 most central nodes, and at $\phi=10\%$ all algorithms exceed the 5\% difference threshold.

Overall, higher levels of uncertainty do not substantially reduce data utility for most metrics when using the \textsc{greedy} algorithm, except for centrality, suggesting that $\phi$-$2$-anonymity still reasonably balances privacy and utility.  

\section*{Discussion and conclusion}\label{sec:discussion}
In this paper, we argued that attackers attempting to extract sensitive information from a published network dataset are unlikely to possess exact knowledge of the network's structural information.
To account for uncertainty in the assumed attacker scenario, we introduced $\phi$-$k$-anonymity, a fuzzy variant of $k$-anonymity.
While the proposed framework can be applied to various $k$-anonymity measures, we focus on the $(deg, tri)$ measure due to its expressiveness, relative ease of computation, and interpretability in the fuzzy setting.

Our experiments on both graph models and real-world networks demonstrated that even a modest level of uncertainty substantially increases the number of anonymous nodes.
This finding suggests that many structurally unique nodes are highly similar to at least one other node in the network, which an attacker would be unable to distinguish if their information is not exact.
Across graph models with varying densities, a small level of uncertainty ($\phi=5\%$) led to a transition from almost all nodes being unique to nearly all nodes being anonymous. 
Moreover, in real-world networks, introducing $\phi=5\%$ made 64\% of the initially unique nodes anonymous.
In particular, both small networks with low anonymity in the non-fuzzy setting and large networks in our benchmark datasets benefit substantially from such uncertainty.

When applying anonymization algorithms with a perturbation budget of 5\% of the edges, accounting for uncertainty further enhances anonymity, often rendering most nodes anonymous.
In dense graph models, where uniqueness remains high even after anonymization, incorporating uncertainty mitigates this effect, allowing most nodes to be anonymized.
While anonymization algorithms have only limited impact on most real-world networks in the non-fuzzy setting, \textsc{greedy} anonymization with $\phi=10\%$ anonymizes nearly all nodes across the 12 evaluated networks in the considered setting.
Additionally, data utility remained largely unaffected even at higher levels of uncertainty, retaining similar values for structural properties and performance on network analysis tasks.
Altogether, allowing for uncertainty in the attacker scenario makes the publication of $k$-anonymous networks much more attainable, even for networks that are very difficult to anonymize in the non-fuzzy setting.

Our findings illustrate that assumptions about attacker knowledge play a critical role in determining achievable privacy guarantees for network data and the resulting trade-off between privacy and data utility.
Future research should investigate suitable values of $\phi$ for different types of networks and investigate how the application of anonymization algorithms or the addition of noise to the network can increase the assumed level of uncertainty.
Additionally, since information beyond a node's direct neighborhood can be highly de-anonymizing,~\cite{dejong2023effect} it is important to study fuzzy variants of measures that account for information reaching beyond the direct neighborhood of a node.
Such research should consider potential cascading effects that may occur if an attacker uniquely identifies a node.

Overall, our results show that with modest levels of uncertainty, both anonymity and high utility are increasingly attainable, motivating further research on uncertainty-aware privacy guarantees for network data.

\section*{Methods}\label{sec:methods}
In this section we provide definitions, notation, experimental setup, descriptions of anonymization algorithms, and utility metrics used.

\subsection*{Definitions and notation}
We define an undirected graph $G=(V, E)$ consisting of a node set $V$ and an edge set $E \subseteq \{\{v, w\} : v, w \in V\}$.
The degree of a node $v$ is the number of neighbors: $deg(v) = |\{w : \{v, w\} \in E\}|$. 
The distance $dist(v, w)$ between two nodes $v, w \in V$ is the length of the shortest path between them, i.e., a path with a minimum number of edges traversed.
By convention $dist(v, v) = 0$, and  $dist(v, w)=\infty$ if no path exists between $v$ and $w$.
The latter refers to the case in which the nodes are in different connected components; maximal sets of nodes in which each pair of nodes is connected through a path.
Most nodes in the network are typically in the largest connected component (LCC).
The average path length equals the average of $dist(v, w)$ over all pairs $v \neq w \in V$ with $dist(v, w) < \infty$. 
The $d$-neighborhood of a node $v$, denoted $N_d(v) = (V_{N_d(v)}, E_{N_d(v)})$, is the subgraph induced by all nodes within distance $d$ from $v$ where $V_{N_d(v)} = \{w : dist(v, w) \leq d\}$ and $E_{N_d(v)} = \{ \{u, w\} : u, w \in V_{N_d(v)} \}$.

The number of triangles incident to $v$ equals $tri(v)=|\{ \{u, w\} \in E : u, w \in V_{N_1(v)} \setminus \{v\}\}|$.
The clustering coefficient of a node is $c(v) = tri(v) / \binom{deg(v)}{2}$, the fraction of pairs of neighbors of $v$ that are connected.
The average clustering coefficient of a graph $G$ is $C(G) = \frac{1}{|V|}\sum_{v\in V}c(v)$.

Nodes in real-world networks often cluster into communities: groups that are more densely connected internally than to the rest of the network.
We find communities by applying the Leiden community detection algorithm.~\cite{traag2019louvain}
The quality of a partition into communities is measured by modularity.
Node importance can be quantified using centrality measures.
The betweenness centrality of a node $v$ is the fraction of shortest paths between all pairs of nodes that pass through $v$.

To measure node anonymity, we use the notion of $k$-anonymity.
Two nodes $v, w \in V$ are equivalent if they have the same $(deg, tri)$ signature.
A node is considered $k$-anonymous, if there are at least $k-1$ nodes in the network with the same signature.
Similarly, a node $v$ is $\phi$-$k$-anonymous, c.f. Equation~\ref{eq:phisim}, if it is $\phi$-similar to at least $k-1$ other nodes.
We assess the anonymity of the network as a whole as the fraction of nodes that is $\phi$-$k$-anonymous: $\frac{|\{v\in V : |\phi\text{-}similar(v)| \geq 2\}|}{|V|}$.
A node is unique if it is not $\phi$-$2$-anonymous. 

\subsection*{Anonymization algorithms}
We use three anonymization algorithms: 
(1) Edge Sampling (\textsc{es})~\cite{romanini2021privacy} 
(2) Unique Affected (\textsc{ua})~\cite{de2026anonymization}
and (3) a newly proposed \textsc{greedy} anonymization algorithm which works as follows.
During each iteration, \textsc{greedy} computes the marginal anonymity improvement obtained by deleting every edge. 
The edges are then sorted in descending order of this improvement, and the top $R$ edges are selected for deletion.
If the algorithm finds an edge which anonymizes all nodes, only that edge is deleted. 
If multiple edges yield the same anonymity improvement, ties are broken by selecting the edge incident to the largest number of unique (non-anonymous) nodes (0, 1 or 2).
In case ties still remain, edges are ordered lexicographically.

To determine the anonymity gain obtained by removing a particular edge, the algorithm computes the resulting fraction of $\phi$-$k$-anonymous nodes for each possible edge deletion.
To perform these computations efficiently, the algorithm maintains the signature of each node and whether it is currently unique or not.
For each edge, the algorithm updates the signatures of nodes for which the signature changes  upon deleting the edge.
Based on the new signatures, it recomputes the similarity between all node pairs in the network, based on which the fraction of $\phi$-$k$-anonymous nodes is derived.
The algorithm continues until either the budget is depleted or all nodes are anonymized.

\subsection*{Utility metrics}\label{method:utility}
To assess data utility of the resulting anonymized network, we take into account six metrics. The first four metrics relate to how well the structure of the network is preserved:
\begin{itemize}
    \item The fraction of edges deleted.
    \item The change in the average clustering coefficient.
    \item The change in average path length.
    \item The change in the fraction of nodes that is part of the largest connected component (LCC).
\end{itemize}
For each of these measures, we first compute the value for the original network $G$ and the anonymized network $G'$, and report the relative change compared to the original network.
The last two metrics concern network analysis tasks for which the network could be used: 
\begin{itemize}
    \item The change in community structure. 
    \item The fraction of nodes that are no longer in the top 100 most central nodes according to betweenness centrality.
\end{itemize}

To determine how well community structure is preserved, we use the Normalized Mutual Information (NMI) score~\cite{lancichinetti2012consensus} comparing the partition of nodes into communities found by the \textsc{leiden} algorithm~\cite{traag2019louvain} before and after anonymization.
The communities obtained by community detection algorithms are often not stable: when running community detection multiple times on the same network, the NMI between those communities found does not necessarily equal 1.0. 
We account for this phenomenon as follows.
For the original network and the anonymized network we compute the communities $20$ times using the \textsc{leiden} algorithm.
Based on this we determine $NMI_{stability}$, comparing the communities found for the original network, and $NMI_{anon}$, comparing the communities found for the anonymized network to those found for the original network. 
These metrics are defined in equations~(\ref{eq:nmistability}) and~(\ref{eq:nmianon}) where $C = \{C_1, ..., C_{|C|}\}$ denotes the set of community assignments for the original network and $C'$ the community assignments for the anonymized network. $|C|$ denotes the number of community assignments.

\begin{equation}
    NMI_{stability} = \frac{2}{|C| \cdot (|C| -1)} \sum_{1 \leq i < j \leq |C|} NMI(C_i, C_j)
    \label{eq:nmistability}
\end{equation}

\begin{equation}
    NMI_{anon} = \frac{1}{|C| \cdot |C'|} \sum^{|C|}_{i=1} \sum^{|C'|}_{j=1} NMI(C_i, C'_j) 
    \label{eq:nmianon}
\end{equation}

If $NMI_{stability}$ is low, it is expected that $NMI_{anon}$ will also be low. 
To account for this, we report the difference between the two as defined in equation~(\ref{eq:nmiutil}). 
Finally, note that due to the non-determinism of the community detection algorithms, it could occur that $NMI_{anon}$ is larger than $NMI_{stability}$, which would indicate that the communities generated for the anonymized network ($C'$) have, on average, more in common with the communities found for the original network ($C$), than the communities in $C$ have with themselves. This results in negative values $NMI_{stability} - NMI_{anon}$. 
Experiments in Supplementary Information show that these negative values occur for real-world networks after anonymization. 
This is likely due to chance as the differences between $NMI_{stability} - NMI_{anon}$ are usually close to zero, and these negative values occur more regularly for networks with a low $NMI_{stability}$.

\begin{equation}
    NMI_{utility} = max(0.0, NMI_{stability} - NMI_{anon}) 
    \label{eq:nmiutil}
\end{equation}

\subsection*{Experimental setup and data}
For our experiments, we use three commonly used synthetic graph models which each mimic different real-world properties, and a wide range of real-world networks contained in Table~\ref{tab:data}.
For graph models, we use the Erdős Rényi model (ER),~\cite{erdos1960evolution} the Barabási Albert model (BA),~\cite{barabasi1999emergence} and Watts Strogatz model (WS)~\cite{watts1998collective} with rewiring probability $0.05$.
The ER and BA graphs are generated using igraph,~\cite{igraph} the WS graphs with NetworkX.~\cite{networkx}
For the network models we use $|V|=500$ and a number of edges per node $m=\{1, 2, \dots, 10, 15, \dots, 55, 60\}$ for experiments concerning anonymity, and $m=\{1, 2, 4, 8, 16, 32\}$ for experiments concerning anonymization.

To account for non-determinism in the generation of the networks, we average results on anonymity in graph models over 10 generated networks for each combination of model and number of edges.
Results concerning anonymization are averaged over 5 generated graphs for each model, on which the anonymization algorithm with each setting for $\phi$ is run 5 times.
Each anonymization algorithm deletes 5\% of all edges. Hence budget $B=\lfloor 0.05 \cdot |E|\rfloor $.
To account for possibly long runtimes of the \textsc{greedy} algorithm, we use a recompute gap equal to 1/20th of the edges to be deleted. Hence, each iteration of the algorithm deletes $R=\lfloor 1/20\cdot B \rfloor $ edges.
For experiments on real-world networks, we report results for the 12 networks for which the \textsc{greedy} algorithm terminates within a time limit of 3 hours.

All used code is available via GitHub.
Code for measuring anonymity and perform anonymization is implemented in C++, with separate repositories for \textsc{es} and \textsc{ua} anonymization (\href{https://github.com/RacheldeJong/ANONET}{github.com/RacheldeJong/ANONET}) and  \textsc{greedy} anonymization (\href{https://github.com/franktakes/optianon}{github.com/franktakes/optianon}).

\section*{Acknowledgements}
We would like to thank the Network Science group     (\hyperlink{www.networkscience.nl}{www.networkscience.nl}), especially Gamal Adel Elgamal, for various helpful suggestions and discussions. 

\section*{Author contributions}
R.J., M.L., and F.T. conceptualized the study and methodology. R.J. wrote the manuscript. 
M.L. and F.T.  
supervised and administered the project. R.J. developed the software, performed
the experiments and visualized the results. F.T. developed the greedy algorithm. All authors reviewed and approved the manuscript.

\section*{Data availability}
All network datasets are available in the repositories or accompanying the paper cited in Table~\ref{tab:data}.

\bibliography{bibliography}

@InProceedings{arsene2026simulated,
    author="Arsene, E. Denisa
    and de Jong, Rachel G.
    and Takes, Frank W.
    and Latour, Anna L. D.",
    title="A Simulated Annealing Approach to Social Network Anonymization",
    booktitle="Proceedings of Complex Networks {\&} Their Applications XIV",
    year="2026",
    pages="205--216",
}

@Article{azizi2020epidemics,
  title={Epidemics on networks: Reducing disease transmission using health emergency declarations and peer communication},
  author={Azizi, Asma and Montalvo, Cesar and Espinoza, Baltazar and Kang, Yun and Castillo-Chavez, Carlos},
  journal={Infectious Disease Modelling},
  volume={5},
  pages={12--22},
  year={2020},
  p2ublisher={Elsevier}
}

@Article{barabasi1999emergence,
  title={Emergence of scaling in random networks},
  author={Barab{\'a}si, Albert L{\'a}szl{\'o} and Albert, R{\'e}ka},
  journal={Science},
  volume={286},
  number={5439},
  pages={509--512},
  year={1999},
  p2ublisher={American Association for the Advancement of Science},
}

@misc{biosnapnets,
  author={Marinka Zitnik and Rok Sosi\v{c} and Sagar Maheshwari and Jure Leskovec},
  title={BioSNAP Datasets: Stanford. Biomedical Network Dataset Collection},
  howpublished ={\url{http://snap.stanford.edu/biodata} (last accessed 2025)},
  year = {2018},
}

@article{blumenthal2020exact,
  title={On the exact computation of the graph edit distance},
  author={Blumenthal, David B and Gamper, Johann},
  journal={Pattern Recognition Letters},
  volume={134},
  pages={46--57},
  year={2020},
  publisher={Elsevier}
}

@Article{bojanowski2014measuring,
  title={Measuring segregation in social networks},
  author={Bojanowski, Micha{\l} and Corten, Rense},
  journal={Social Networks},
  volume={39},
  pages={14--32},
  year={2014},
  p2ublisher={Elsevier}
}

@article{bokanyi2023anatomy,
  title={The anatomy of a population-scale social network},
  author={Bok{\'a}nyi, Eszter and Heemskerk, Eelke M and Takes, Frank W},
  journal={Scientific Reports},
  volume={13},
  number={1},
  pages={9209},
  year={2023},
  p2ublisher={Nature Publishing Group UK London}
}

@inproceedings{bonello2025ga,
  author       = {Samuel Bonello and
                  Rachel G. de Jong and
                  Thomas H. W. B{\"{a}}ck and
                  Frank W. Takes},
  title        = {Utility-aware Social Network Anonymization using Genetic Algorithms},
    year = {2025},
    isbn = {9798400714641},
    pages = {775–778},
    numpages = {4},
    booktitle = {Proceedings of the Genetic and Evolutionary Computation Conference Companion (GECCO)},
}

@article{cremers2025unveiling,
  title={Unveiling the social fabric through a temporal, nation-scale social network and its characteristics},
  author={Cremers, Jolien and Kohler, Benjamin and Maier, Benjamin Frank and Eriksen, Stine Nymann and Einsiedler, Johanna and Christensen, Frederik K{\o}lby and Lehmann, Sune and Lassen, David Dreyer and Mortensen, Laust Hvas and Bjerre-Nielsen, Andreas},
  journal={Scientific Reports},
  volume={15},
  number={1},
  pages={18383},
  year={2025},
  publisher={Nature Publishing Group UK London}
}

@misc{dataforgoodlab,
 title={Data 4 good lab},
 author={Michael Fire},
 year ={2020},
 howpublished ={\url{https://data4goodlab.github.io/MichaelFire/\#section3} (last accessed 2025)}
}

@inproceedings{de2026anonymization,
  title={The anonymization problem in social networks},
  author={de Jong, Rachel G. and van der Loo, Mark P. J. and Takes, Frank W.},
  booktitle={Proceedings of the Workshop on Modelling and Mining Networks (WAW)},
  year={To appear}
}

@techreport{deVries2023risk,
  author       = {Marieke M. de Vries and Rachel G. de Jong and Mark P. J. van der Loo and Peter-Paul de Wolf and Frank W. Takes},
  title        = {The Risk of Identity Disclosure through Network Structure: Anecdotal Evidence from a Hackathon},
  institution  = {United Nations Economic Commission for Europe},
  type         = {Working Document},
  year         = {2023},
  month        = aug,
  url          = {https://unece.org/statistics/documents/2023/08/working-documents/risk-identity-disclosure-through-network-structure},
}

@article{dejong2023algorithms,
    author = {de Jong, Rachel G. and van der Loo, Mark P. J. and Takes, Frank W.},
    title = {Algorithms for Efficiently Computing Structural Anonymity in Complex Networks},
    year = {2023},
    issue_date = {2023},
    p2ublisher = {Association for Computing Machinery},
    address = {New York, NY, USA},
    volume = {28},
    issn = {1084-6654},
    journal = {ACM Journal of Experimental Algorithmics},
    articleno = {1.7},
    numpages = {22}
}

@article{dejong2023effect,
  title={The effect of distant connections on node anonymity in complex networks},
  author={de Jong, Rachel G. and van der Loo, Mark P. J. and Takes, Frank W.},
  journal={Scientific Reports},
  volume={14},
  number={1},
  pages={1156},
  year={2024},
  p2ublisher={Nature Publishing Group UK London}
}

@article{dejong2024comparison,
  title={A systematic comparison of measures for k-anonymity in networks},
  author={de Jong, Rachel G and van der Loo, Mark PJ and Takes, Frank W},
  journal={arXiv preprint arXiv:2407.02290},
  year={2024}
}

@article{duncan_disclosure,
	title = {Disclosure-{Limited} {Data} {Dissemination}},
	volume = {81},
	issn = {0162-1459},
	url = {https://www.tandfonline.com/doi/abs/10.1080/01621459.1986.10478229},
	number = {393},
	urldate = {2023-09-18},
	journal = {Journal of the American Statistical Association},
	author = {Duncan, George T. and Lambert, Diane},
	month = mar,
	year = {1986},
	pages = {10--18},
}

@Article{erdos1960evolution,
  title={On the evolution of random graphs},
  author={Erdős, Paul and R{\'e}nyi, Alfr{\'e}d},
  journal={Publication of the Mathematical Institute of the Hungarian Academy of Sciences},
  volume={5},
  number={1},
  pages={17--60},
  year={1960},
  p2ublisher={Citeseer}
}

@Inproceedings{hay2008resisting,
    author = {Hay, Michael and Miklau, Gerome and Jensen, David and Towsley, Don and Weis, Philipp},
    title = {Resisting structural re-identification in anonymized social networks},
    year = {2008},
    issue_date = {August 2008},
    p2ublisher = {VLDB Endowment},
    volume = {1},
    number = {1},
    issn = {2150-8097},
    booktitle = {Proceedings of the VLDB Endowment},
    month = {aug},
    pages = {102–114},
    numpages = {13}
}

@Article{igraph,
  title = {The igraph software package for complex network research},
  author = {Gabor Csardi and Tamas Nepusz},
  journal = {InterJournal Complex Systems},
  volume={1695},
 
  year = {2006},
  url = {https://igraph.org},
}

@article{jiang2021applications,
  title={Applications of differential privacy in social network analysis: A survey},
  author={Jiang, Honglu and Pei, Jian and Yu, Dongxiao and Yu, Jiguo and Gong, Bei and Cheng, Xiuzhen},
  journal={IEEE Transactions on Knowledge and Data Engineering},
  volume={35},
  number={1},
  pages={108--127},
  year={2021},
  p2ublisher={IEEE}
}

@article{kazmina2024socio,
  title={Socio-economic segregation in a population-scale social network},
  author={Kazmina, Yuliia and Heemskerk, Eelke M and Bok{\'a}nyi, Eszter and Takes, Frank W},
  journal={Social Networks},
  volume={78},
  pages={279--291},
  year={2024},
  p2ublisher={Elsevier}
}

@inproceedings{kunegis2013konect,
    author = {Kunegis, J\'{e}r\^{o}me},
    title = {KONECT: the {K}oblenz network collection},
    year = {2013},
    isbn = {9781450320382},
    booktitle = {Proceedings of the 22nd International Conference on World Wide Web},
    pages = {1343–1350},
    numpages = {8},

}

@article{lambert1993measures,
  title={Measures of disclosure risk and harm},
  author={Lambert, Diane},
  journal={Journal of Official Statistics},
  volume={9},
  pages={313--331},
  year={1993},
}

@article{lancichinetti2012consensus,
  title={Consensus clustering in complex networks},
  author={Lancichinetti, Andrea and Fortunato, Santo},
  journal={Scientific Reports},
  volume={2},
  number={1},
  pages={336},
  year={2012},
  publisher={Nature Publishing Group UK London}
}

@article{li2018influence,
  title={Influence maximization on social graphs: A survey},
  author={Li, Yuchen and Fan, Ju and Wang, Yanhao and Tan, Kian-Lee},
  journal={IEEE Transactions on Knowledge and Data Engineering},
  volume={30},
  number={10},
  pages={1852--1872},
  year={2018},
  p2ublisher={IEEE}
}

@Inproceedings{liu2008towards,
    author = {Liu, Kun and Terzi, Evimaria},
    title = {Towards identity anonymization on graphs},
    year = {2008},
    isbn = {9781605581026},
    booktitle = {Proceedings of the ACM SIGMOD International Conference on Management of Data},
    pages = {93–106},
    numpages = {14},

}

@inproceedings{networksrepository,
    author = {Rossi, Ryan A. and Ahmed, Nesreen K.},
    title = {The network data repository with interactive graph analytics and visualization},
    year = {2015},
    isbn = {0262511290},
    
    booktitle = {Proceedings of the AAAI Conference on Artificial Intelligence},
    pages = {4292–4293},
    numpages = {2},
}

@techreport{networkx,
  title={Exploring network structure, dynamics, and function using NetworkX},
  author={Aric Hagberg and Pieter Swart and Daniel Schult},
  year={2008},
  institution={Los Alamos National Lab (LANL)}
}

@article{panayiotou2025anatomy,
  title={Anatomy of a Swedish population-scale network: G. Panayiotou et al.},
  author={Panayiotou, Georgios and Wohlert, Inga K and Bask, Miia and Bask, Mikael and Magnani, Matteo and M{\"a}kinen, Ilkka Henrik},
  journal={Scientific Reports},
  volume={15},
  number={1},
  pages={30300},
  year={2025},
  publisher={Nature Publishing Group UK London}
}

@article{romanini2021privacy,
  title={Privacy and uniqueness of neighborhoods in social networks},
  author={Romanini, Daniele and Lehmann, Sune and Kivel{\"a}, Mikko},
  journal={Scientific Reports},
  volume={11},
  number={1},
  pages={20104},
  year={2021},
  p2ublisher={Nature Publishing Group UK London}
}

@misc{sapiezynski2019copenhagen,
  title={The Copenhagen Networks Study interaction data. Figshare},
  author={Piotr Sapiezynski and Arkadiusz Stopczynski and David D. Lassen and Sune L. Jørgensen },
  howpublished={\url{https://doi.org/10.6084/m9.figshare.7267433.v1} (last accessed 2025) },
  year={2019},
}

@misc{snapnets,
  author= {Jure Leskovec and Andrej Krevl},
  title= {SNAP Datasets: Stanford Large Network Dataset Collection},
  howpublished = {\url{http://snap.stanford.edu/data} (last accessed 2025)},
  year=  {2014},
}

@article{stehle2011high,
  title={High-resolution measurements of face-to-face contact patterns in a primary school},
  author={Stehl{\'e}, Juliette and Voirin, Nicolas and Barrat, Alain and Cattuto, Ciro and Isella, Lorenzo and Pinton, Jean-Fran{\c{c}}ois and Quaggiotto, Marco and Van den Broeck, Wouter and R{\'e}gis, Corinne and Lina, Bruno and Vanhems, Philippe},
  journal={PLoS ONE},
  volume={6},
  number={8},
  pages={e23176},
  year={2011}
}

@article{traag2019louvain,
  title={From Louvain to Leiden: guaranteeing well-connected communities},
  author={Traag, Vincent A and Waltman, Ludo and Van Eck, Nees Jan},
  journal={Scientific Reports},
  volume={9},
  number={1},
  pages={1--12},
  year={2019},
  publisher={Nature Publishing Group}
}

@Article{watts1998collective,
  title={Collective dynamics of ‘small-world’ networks},
  author={Duncan J. Watts and Steven H. Strogatz},
  journal={Nature},
  volume={393},
  number={6684},
  pages={440--442},
  year={1998},
  p2ublisher={Nature Publishing Group},
}

@Inproceedings{zhou2008preserving,
  title={Preserving privacy in social networks against neighborhood attacks},
  author={Zhou, Bin and Pei, Jian},
  booktitle={Proceedings of the IEEE International Conference on Data Engineering},
  pages={506--515},
  year={2008},

}

\section*{Supplementary information}\label{sec:supmat}
\section*{Measures and distances}
In the literature on $k$-anonymity for networks, various measures for anonymity have been introduced that reflect different attacker scenarios.
Supplementary Table~\ref{tab:measures} gives an overview of the most commonly used measures,~\cite{dejong2024comparison} a formal definition of the value for a given node $v$, the difference for two given nodes $(M(v) - M(w))$ and their accompanying time complexity.
In the difference computation for \textsc{degdist} and \textsc{vrq}, to ensure correct computation, we assume that nodes are traversed from highest to lowest degree. 
If either $V_{N(v)}$ or $V_{N(w)}$ contains more nodes, the sum of values of these remaining nodes is added to the difference.
When applied in domain-specific scenarios, the definitions in Supplementary Table~\ref{tab:measures} can be extended with weights to, for example, assign greater importance to differences in nodes than in edges.

\begin{table}[h!]
\small
\centering
\begin{tabular}{|r|l|l|l|l|}
\hline
\textbf{Measure}                 & $M(v)$                 & \textbf{Comp.}         & $M(v) - M(w)$                                                                                  & \textbf{Comp.}     \\ \hline
\textsc{degree}                        & $deg(v)$                   & $\mathcal{O}(1)$       & $|deg(v) -deg(w)|$                                                                                       & $\mathcal{O}(1)$   \\ \hline
\textsc{count}             & $|V_{N(v)}| + |E_{N(v)}|$     & $\mathcal{O}(|V|^2)$   & \begin{tabular}[c]{@{}l@{}}$| \ |V_{N(v)}| - |V_{N(w)}| \ | + |\ |E_{N(v)}| - |E_{N(w)}| \ |$\end{tabular} & $\mathcal{O}(1)$   \\ \hline
$(deg, tri)$             & $(deg, tri)$     & $\mathcal{O}(|V|^2)$   & \begin{tabular}[c]{@{}l@{}}$ |deg(v) - deg(w)| \ , \ |tri(v) - tri(w)| \ $\end{tabular} & $\mathcal{O}(1)$   \\ \hline
\textsc{degdist}           & $\sum_{v\in V_N} |E_{N(v)}|$ & $\mathcal{O}(|V|^2)$   & $\sum_{v, w \in V_{N(v)}, V_{N(w)}} |\ |E_{N(v)}| - |E_{N(w)}| \ |$                                                & $\mathcal{O}(|V|log(|V|))$ \\ \hline
\textsc{vrq}               & $\sum_{v\in V_N} degree(v)$   & $\mathcal{O}(|V|^2)$   & $\sum_{v, w\in V_{N(v)}, V_{N(w)}} |degree(v) - degree(w)|$                                                     & $\mathcal{O}(|V|log(|V|))$ \\ \hline
\textsc{$d$-$k$-Anonymity} & $C(v)$              & ? & GED(v, w)                                                                                             & NP-hard   \\ \hline
\end{tabular}
\caption{Measures for anonymity and definitions to compute the value for a given node $v$ (second column), the difference for two nodes (fourth column) and their corresponding time complexity (third and last columns). Time complexity is an open problem for \textsc{$d$-$k$-anonymity} as it requires isomorphism checks for which the complexity is unknown.
Here, $V_{N(v)}$ denotes the set of nodes in the neighborhood of node $v$, including node $v$ itself, and $E_{N(v)}$ denotes the set of edges in the neighborhood consisting of the edges to the neighbors of $v$ and the edges between neighbors of node $v$.
}
\label{tab:measures}
\end{table}

The first measure \textsc{degree} is equal to the node degree. 
This measure is simple to compute and models a weak attacker scenario. 
As a result, \textsc{degree} results in high anonymity in many networks. 
When using measures that assume more structural information, such as \textsc{count} or $(deg, tri)$, the variant considered in this paper, the anonymity decreases strongly. 

To show that \textsc{count} and $(deg, tri)$ are essentially the same measures in the non-fuzzy setting, the theorem below states that nodes equivalent according to \textsc{count} are also equivalent for $(deg, tri)$ and vice versa.

\begin{theorem}
    Given a graph $G=(V, E)$, nodes $v, w \in V$ are equivalent according to $(deg, tri)$ iff they are equivalent according to \textsc{count}. \label{thm:count}
\end{theorem}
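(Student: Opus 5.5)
The plan is to show that the two signatures are related by an explicit invertible map, so that equality of one is equivalent to equality of the other. I read equivalence under \textsc{count} as equality of the pair $(|V_{N(v)}|, |E_{N(v)}|)$. This is consistent with the difference column of Supplementary Table~\ref{tab:measures}, which compares the node counts and edge counts separately.

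This reading matters, because the single number $|V_{N(v)}|+|E_{N(v)}|$ would not work. That sum equals $2\,deg(v)+1+tri(v)$. A node with $(deg,tri)=(3,2)$ and one with $(4,0)$ both give $9$, so the statement would fail. I will state this interpretation explicitly at the start of the proof.

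The first step is to express both counts in terms of $deg(v)$ and $tri(v)$, using the definitions in the Methods section and the caption of Supplementary Table~\ref{tab:measures}.

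\emph{Node count.} $V_{N(v)}$ consists of $v$ and its neighbours, and there are no self-loops. Hence $|V_{N(v)}| = deg(v)+1$.

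\emph{Edge count.} $E_{N(v)}$ is the disjoint union of two sets:
\begin{itemize}
\item the edges incident to $v$, of which there are $deg(v)$;
\item the edges with both endpoints in $V_{N_1(v)}\setminus\{v\}$.
\end{itemize}
By the definition of $tri(v)$ in the Methods, the second set has exactly $tri(v)$ elements. Disjointness holds because an edge of the first kind contains $v$ and one of the second kind does not. Therefore $|E_{N(v)}| = deg(v)+tri(v)$.

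The second step is to define $f(d,t) = (d+1,\, d+t)$, so that the \textsc{count} signature of $v$ is $f(deg(v), tri(v))$. The map $f$ is injective on $\mathbb{Z}^2$, with inverse $(a,b)\mapsto(a-1,\, b-a+1)$. The two directions then follow.
\begin{itemize}
\item If $v$ and $w$ have equal $(deg,tri)$ signatures, applying $f$ gives equal \textsc{count} signatures.
\item If they have equal \textsc{count} signatures, applying $f^{-1}$ gives equal $(deg,tri)$ signatures.
\end{itemize}

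The only real obstacle is the bookkeeping in the first step: making sure the neighbourhood edge set in the \textsc{count} definition is exactly the incident edges together with the edges among neighbours, and that it coincides with the set counted by $tri(v)$. I would also note the choice of the $1$-neighbourhood and the absence of self-loops. Once this is settled, the remainder is immediate algebra.
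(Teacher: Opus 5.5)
Your proof is correct and follows the paper's argument: the paper also reads \textsc{count}-equivalence as equality of both $|V_{N(v)}|$ and $|E_{N(v)}|$, and it uses $deg(v)=|V_{N(v)}|-1$ and $tri(v)=|E_{N(v)}|-deg(v)$ to transfer equality in both directions. Your invertible map $f$ and your remark that the single-sum reading $|V_{N(v)}|+|E_{N(v)}|$ from the table would make the statement false (e.g.\ $(3,2)$ versus $(4,0)$) are useful clarifications that the paper leaves implicit.
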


\begin{proof}
Nodes are equivalent according to $(deg, tri)$ if 
$deg(v) = deg(w)$ and $tri(v) = tri(w)$,
and equivalent according to \textsc{count} if 
$|V_{N(v)}| = |V_{N(w)}|$ and $|E_{N(v)}| = |E_{N(w)}|$, i.e., they have the same counts of nodes and edges in their neighborhood.

For any node $v$ it holds that $deg(v) = |V_{N(v)}|-1$, as the $deg(v)$ equals the number of nodes to which $v$ connects. Hence, if two nodes have the same degree they have the same number of nodes in their neighborhood and the other way around.

Additionally, for any node $v$ it holds that $tri(v) = |E_{N(v)}| - deg(v)$, as $tri(v)$ equals the number of edges among neighbors, and $|E_{N(v)}|$ equals the number of edges from $v$ to its neighbors (which is equal to the node degree) and the number of edges among its neighbors.
Given that the degrees of the two nodes are equal and $tri(v)=tri(w)$, the number of edges in the neighborhood must also be equal for the equation to hold.
Similarly, if the degree and number of edges are equal, thus $deg(v)=deg(w)$ and $|E_{N(v)}| = |E_{N(w)}|$  the number of triangles must also be equal: $tri(v) = tri(w)$.
Hence, two nodes $v, w \in V$ are equivalent according to $(deg, tri)$ iff they are equivalent according to \textsc{count}. 
\end{proof}

For \textsc{$d$-$k$-anonymity}, the most complete measure in our set, the distance between two nodes equals the editing distances between node neighborhoods, i.e., the number of node/edge deletions/additions that needs to be applied to transform one neighborhood such that it is isomorphic to the other.
While \textsc{$d$-$k$-anonymity} is the most complete and intuitive to interpret, computing the graph edit distance is known to be NP-hard~\cite{blumenthal2020exact} which makes it computationally infeasible to use in this scenario for larger networks.
Other measures, \textsc{vrq} and \textsc{degdist} are, compared to $(deg, tri)$, computationally slightly more complex and more difficult to interpret.

To better understand the relation among the aforementioned measures, they can be ordered based on strictness.~\cite{dejong2024comparison}
If a measure $A$ is more strict than a measure $B$, then equivalence under measure $A$ requires equivalence under measure $B$, while equivalence under measure $B$ alone might be insufficient to guarantee equivalence under measure $A$.
For example, nodes equivalent under \textsc{$d$-$k$-anonymity} must also be equivalent under $(deg, tri)$-anonymity, while the converse does not hold.
As a result, to make a graph $k$-anonymous according to \textsc{$d$-$k$-anonymity}, it needs to be $k$-anonymous under $(deg, tri)$-anonymity.
Hence, the number of alterations required for $(deg, tri)$-anonymity is a lower bound, for the number of alterations required for $d$-$k$-anonymity.

\section*{($deg, tri)$-signatures and $\phi$-$k$-anonymity in real-world networks}
Supplementary Figure~\ref{fig:app:nm1} shows for the smallest 20 included real-world networks the degree and number of coinciding triangles of its nodes (which together form their $(deg, tri)$ signature) and for each node whether it is $\phi$-$2$-anonymous.

\begin{figure}[b!]
    \centering
    \includegraphics[width=0.9\textwidth]{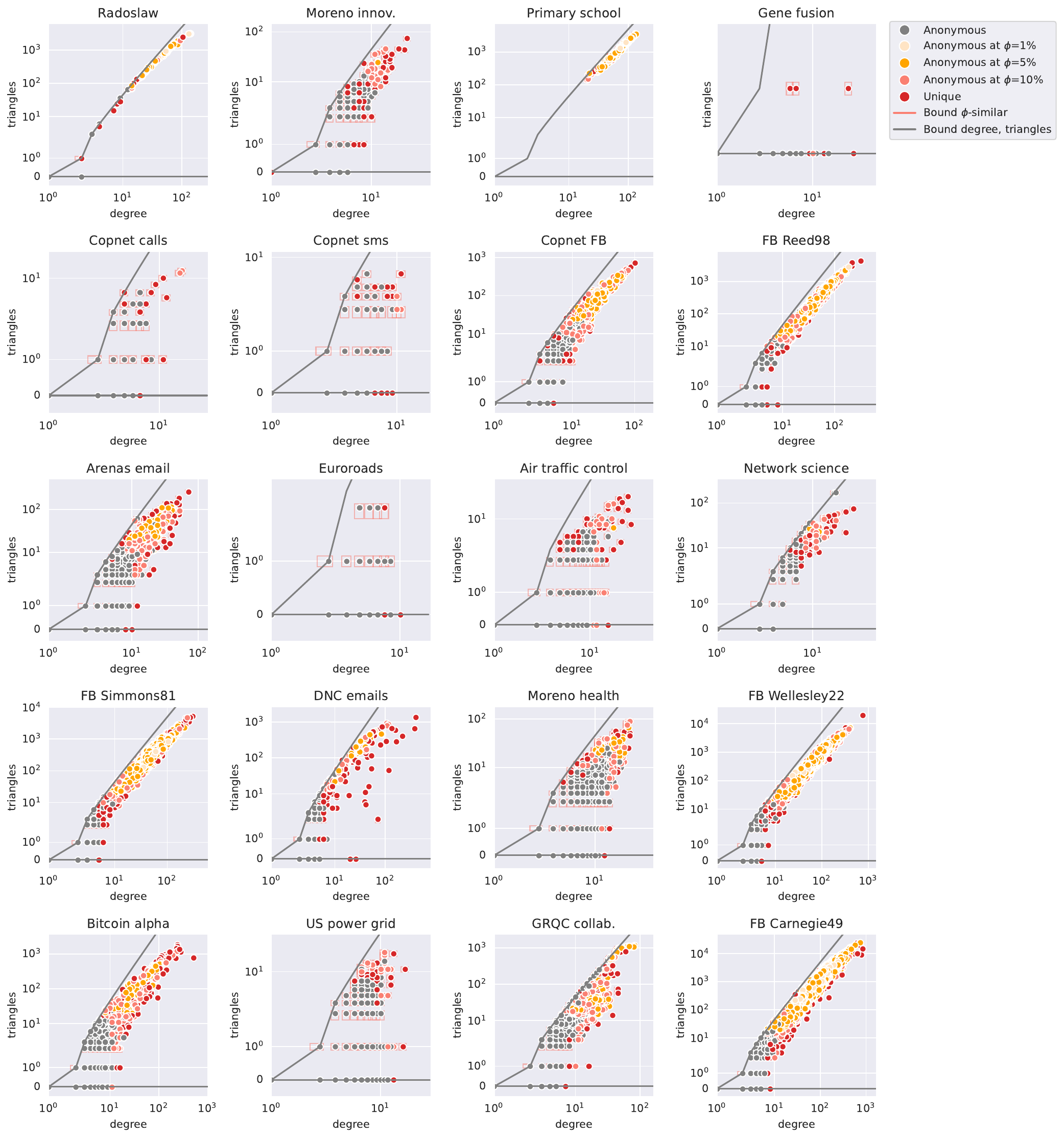}
    \caption{$\phi$-$k$-anonymity in 20 real-world networks. 
    In each subfigure the horizontal axis denotes the node degree, the vertical axis the number of triangles the node is part of which together form the $(deg, tri)$ node signature.
    The grey lines indicate the minimum and maximum number of triangles for each degree value, corresponding to the number of triangles in a star graph and a complete graph respectively. 
    Each dot represents the degree and triangles value for at least one node. 
    Grey nodes are 2-anonymous, red nodes unique. Yellow, orange and pink nodes are 2-$\phi$-anonymous for the $\phi$ value indicated by color. 
    Rectangles indicate the area to which each center nodes is $\phi$-similar with $\phi=10\%$.
    }
    \label{fig:app:nm1}
\end{figure}

\clearpage

\section*{Graph models with 1,000 nodes}
To show how results scale to a larger number of nodes in graph models, this section contains results for each of the three graph models with 1,000 nodes.
Supplementary Figure~\ref{app:fig:model:anon} contains results on anonymity with different levels of uncertainty $\phi$, and Supplementary Fig.~\ref{app:fig:model:anonymization} on anonymization.
Compared to results in Fig. 3 in the main text of this paper, the transition where most nodes in the graph models become anonymous takes place with a smaller value $\phi$.
These results show that uncertainty is also effective, if not more effective, when considered for larger graphs.

\begin{figure}[h]
    \centering
    \includegraphics[width=0.9\linewidth]{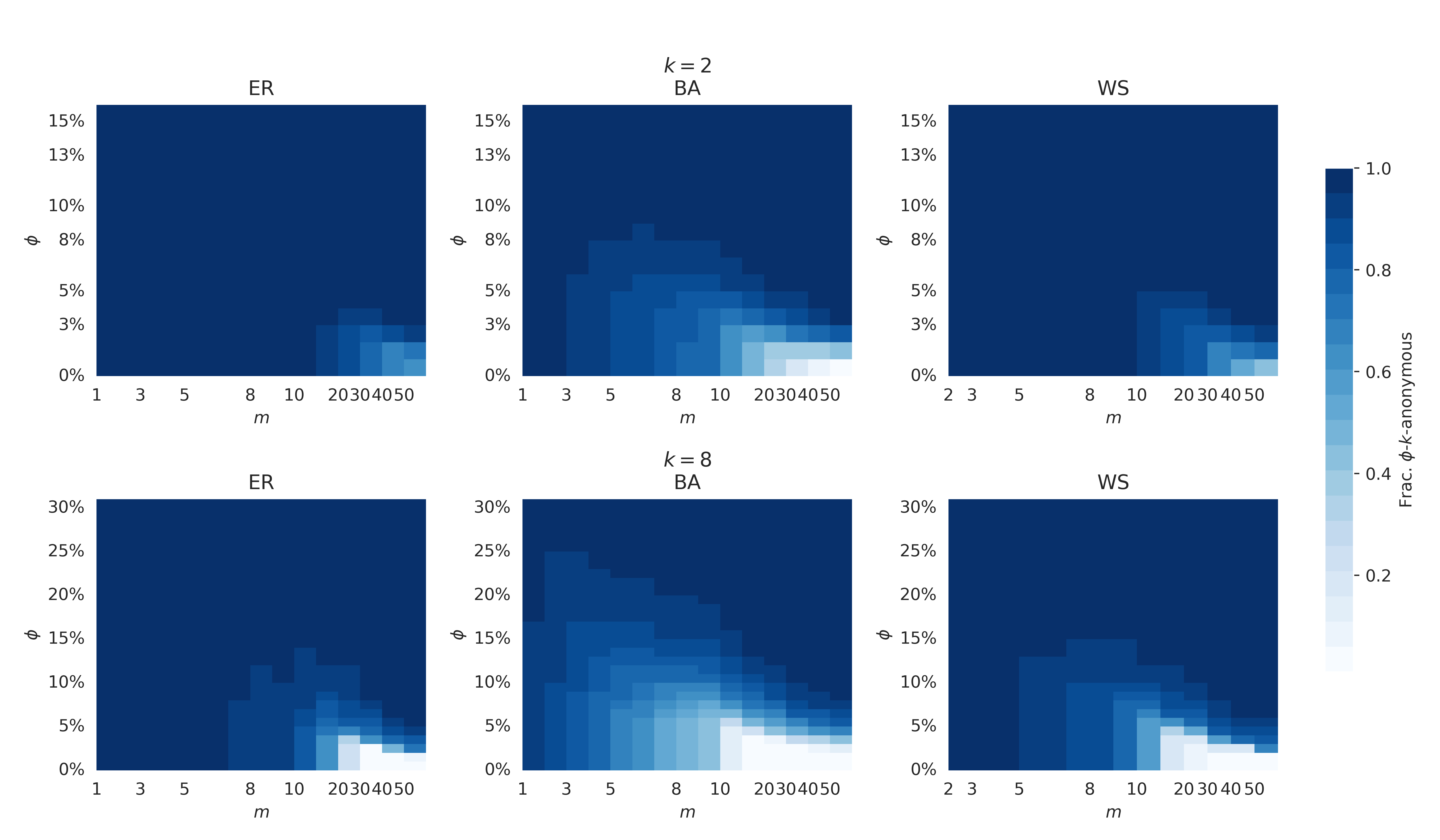}
    \caption{$\phi$-$k$-Anonymity for $k=2$ (top) and $k=8$ (bottom) in Erdős–Rényi (ER, left) Barabási–Albert (BA, middle) and Watts Strogatz (WS, right) graph models with 1,000 nodes.
    The horizontal axis denotes the number of edges per node $m$, the vertical axis the uncertainty $\phi$.
    Color indicates the fraction of $\phi$-$k$-anonymous nodes ranging from white (all nodes unique) to dark blue (all nodes are anonymous).}
    \label{app:fig:model:anon}
\end{figure}

\begin{figure}[h]
    \centering
    \includegraphics[width=0.9\linewidth]{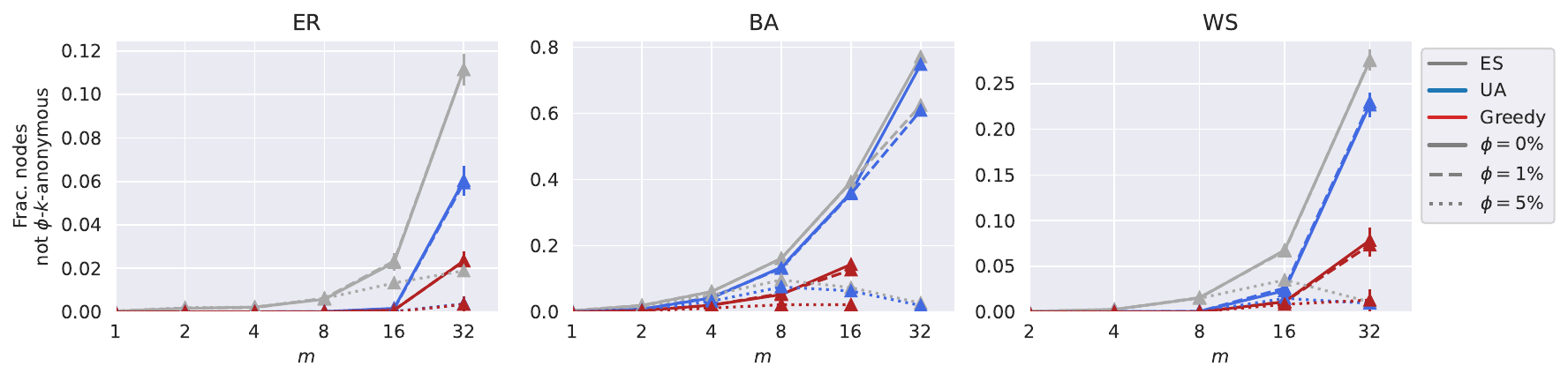}
    \caption{$\phi$-$k$-Anonymity after budgeted anonymization (deleting 5\% of the edges) on graph models ER (left), BA (middle) and WS (right) with 1,000 nodes using different anonymization algorithms (color) and levels of uncertainty $\phi$ (linestyle). The horizontal axis denotes the number of connections added to each node, and the vertical axis denotes the fraction of unique nodes for $k$=2 and varying $\phi$. Vertical lines indicate the standard deviation.}
    \label{app:fig:model:anonymization}
\end{figure}

\newpage
\section*{($deg, tri)$-signatures and $\phi$-$k$-anonymity in real-world networks after anonymization}
Supplementary Figure~\ref{app:fig:nmanon} shows for the 12 network datasets used in anonymization experiments, the $(deg, tri)$-signatures of its nodes, and for each node whether it is $\phi$-$2$-anonymous.
\begin{figure}[h]
    \centering
    \includegraphics[width=\linewidth]{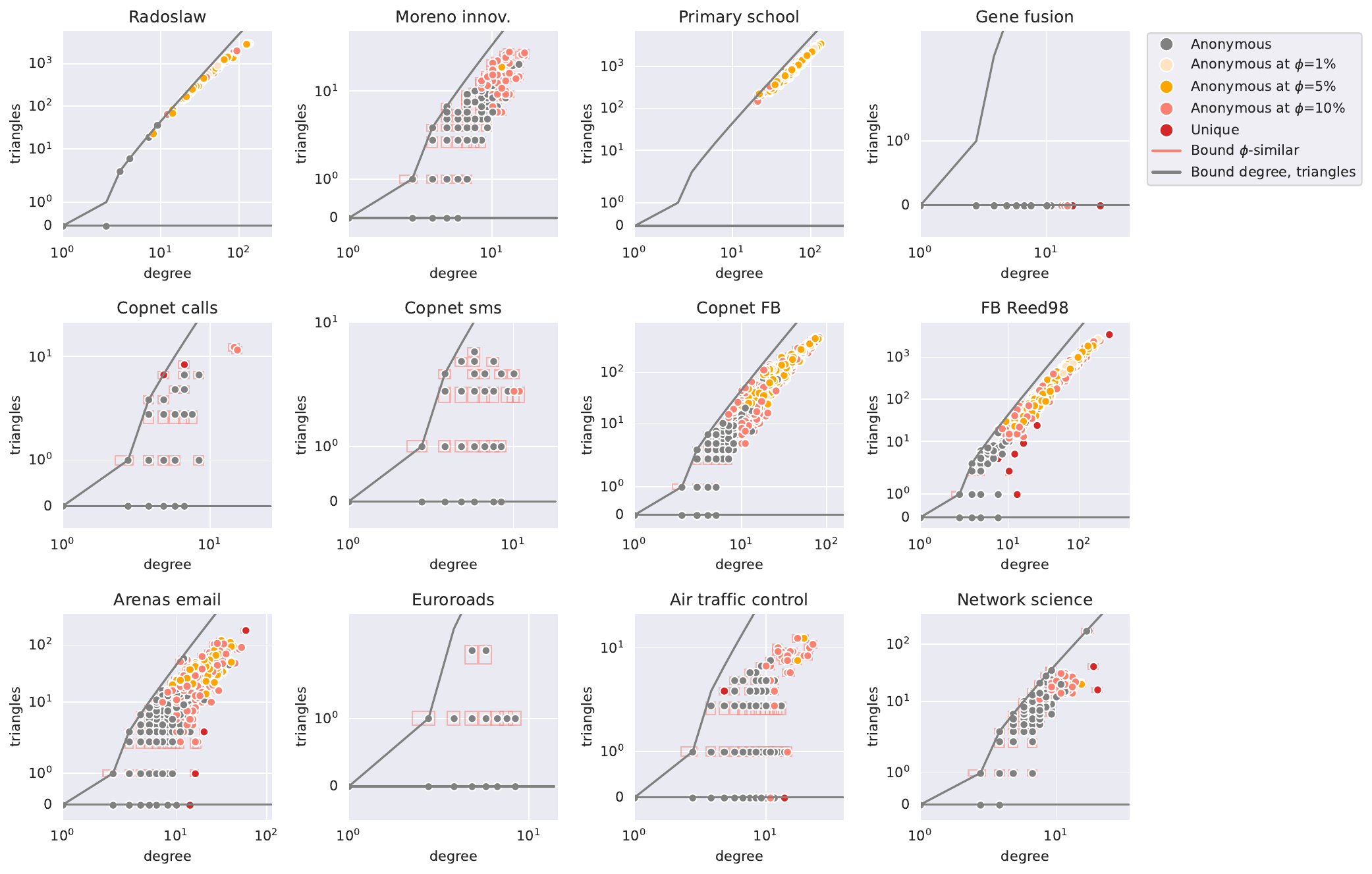}
    \caption{
    $\phi$-$k$-anonymity in 12 real-world networks after \textsc{greedy} anonymization. 
    In each subfigure the horizontal axis denotes the node degree, the vertical axis the number of triangles the node is part of which together form the $(deg, tri)$ node signature.
    The grey lines indicate the minimum and maximum number of triangles for each degree value, corresponding to the number of triangles in a star graph and a complete graph respectively. 
    Each dot represents the degree and triangles value for least one node after anonymization. 
    Grey nodes are 2-anonymous, red nodes unique. Yellow, orange and pink nodes are 2-$\phi$-anonymous for the $\phi$ value indicated by color. 
    Rectangles indicate the area to which each center nodes is $\phi$-similar with $\phi=10\%$.
    }
    \label{app:fig:nmanon}
\end{figure}

\newpage
\section*{Utility}
Supplementary Figures~\ref{fig:util:es}, \ref{fig:util:ua} and \ref{fig:util:greedy} show for the 12 networks used in the anonymization experiments how well utility is preserved in the networks after applying the corresponding anonymization algorithm.

\begin{figure}[b!]
    \centering
    \includegraphics[height=0.8\textheight]{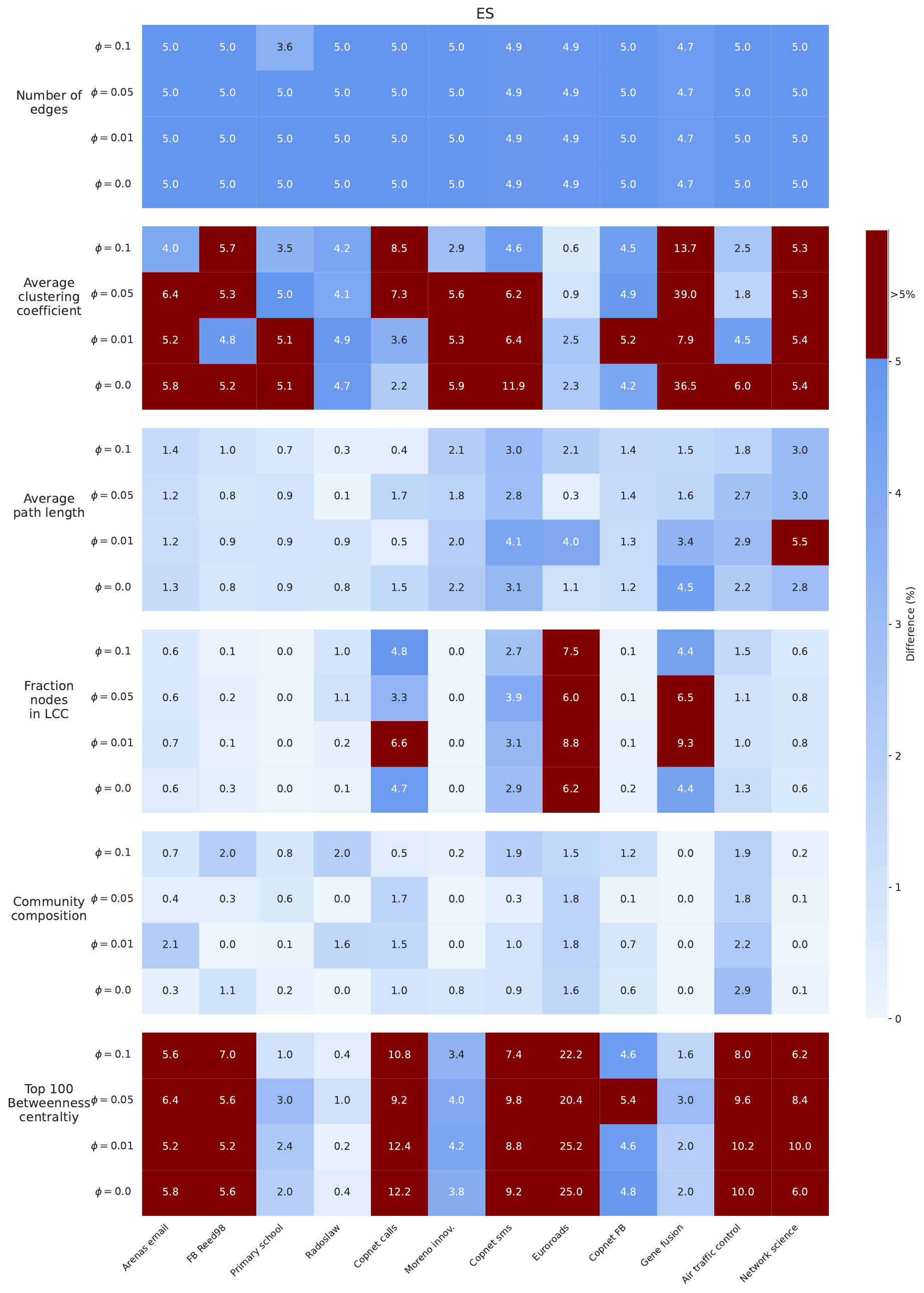}
    \caption{Data utility after budgeted anonymization with \textsc{es}. The plot in each row shows the relative difference,
    comparing the network before and after anonymization, in a data utility metric. Results are shown for 12 networks (horizontal axis), and five levels of uncertainty $\phi$ (vertical axis). White indicates no difference, while the darker shades of blue indicate larger differences. Red cells indicate differences larger than 5\%.}
    \label{fig:util:es}
\end{figure}

\begin{figure}
    \centering
    \includegraphics[height=0.8\textheight]{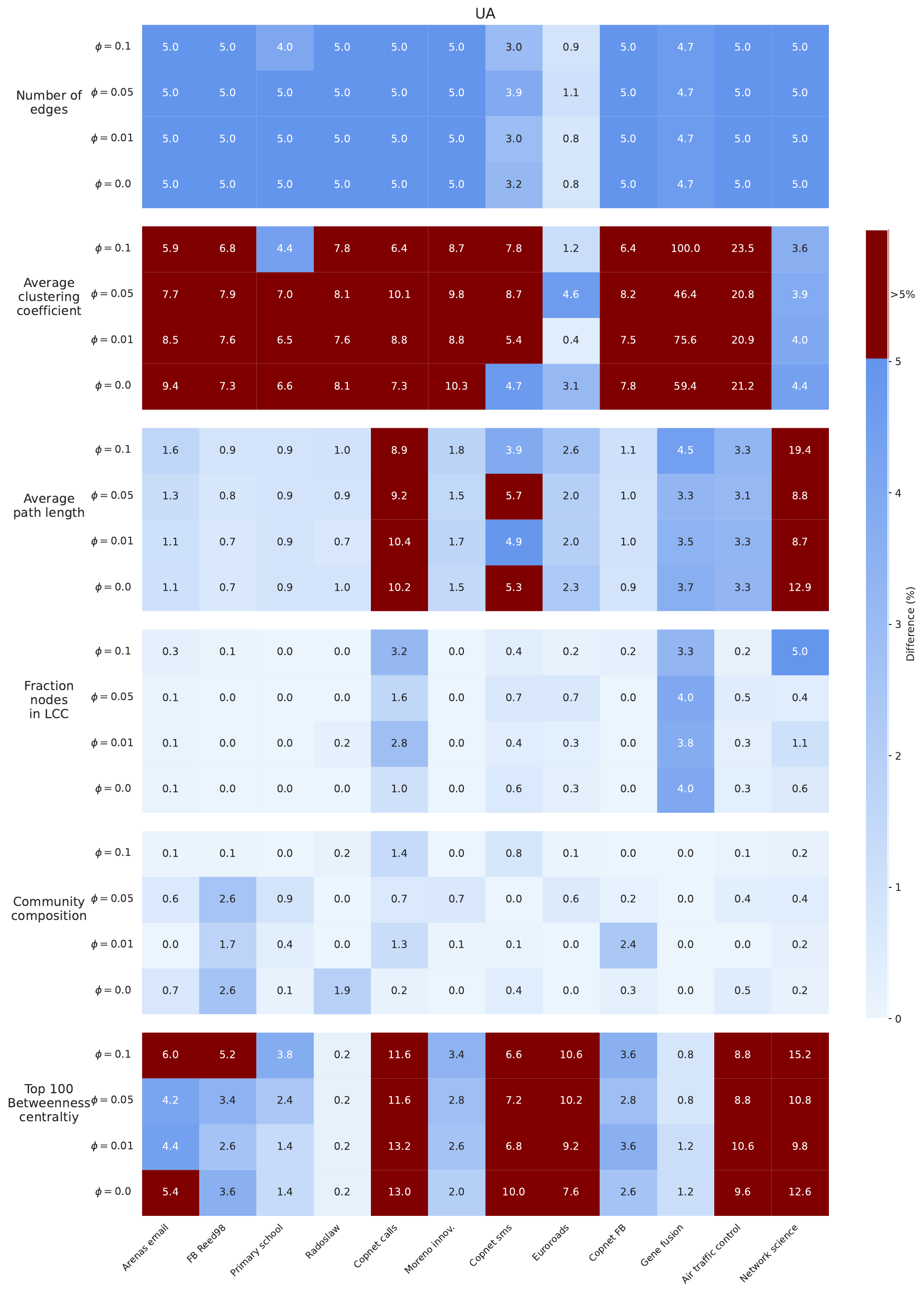}
    \caption{Data utility after budgeted anonymization with \textsc{ua}. The plot in each row shows the relative difference, comparing the network before and after anonymization, in a data utility metric. Results are shown for 12 networks (horizontal axis), and five levels of uncertainty $\phi$ (vertical axis). White indicates no difference, while the darker shades of blue indicate larger differences. Red cells indicate differences larger than 5\%.}
    \label{fig:util:ua}
\end{figure}

\begin{figure}
    \centering
    \includegraphics[height=0.8\textheight]{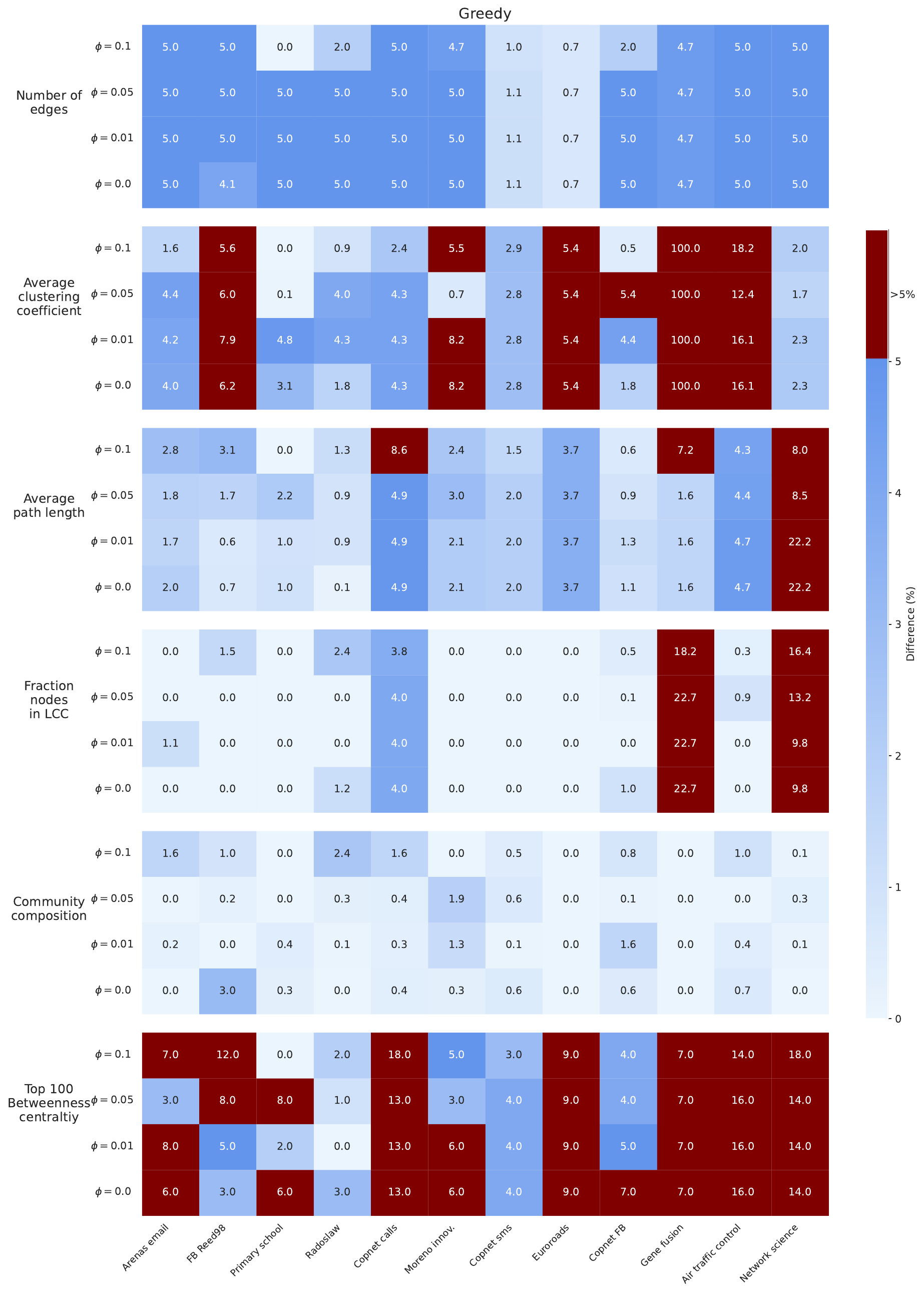}
    \caption{Data utility after budgeted anonymization with \textsc{greedy}. The plot in each row shows the relative difference, comparing the network before and after anonymization, in a data utility metric. Results are shown for 12 networks (horizontal axis), and five levels of uncertainty $\phi$ (vertical axis). White indicates no difference, while the darker shades of blue indicate larger differences. Red cells indicate differences larger than 5\%.}
    \label{fig:util:greedy}
\end{figure}

\section*{Measuring similarity in community structure with NMI}
Supplementary Figure~\ref{fig:util:comm} shows, for the 12 networks included in the anonymization experiments, the obtained values for $NMI_{stability}$ and $NMI_{anon}$ for the community partitions found by the \textsc{leiden}~\cite{traag2019louvain} algorithm after applying \textsc{greedy} anonymization. 
The results show that there are instances where $NMI_{anon}$ exceeds $NMI_{stability}$.
This implies that, on average, the community partitions found for the anonymized network ($C'$) have more in common with those found for the original network ($C$) than independently detected community partitions of $C$ agree with each other.
Cases where $NMI_{anon} > NMI_{stability}$ are more frequent for networks with unstable communities, as indicated by low $NMI_{stability}$ values.
In most of these unstable instances $NMI_{anon}$ exceeds $NMI_{stability}$ by less than 0.025.

\begin{figure}[h]
    \centering
    \includegraphics[width=0.5\textwidth]{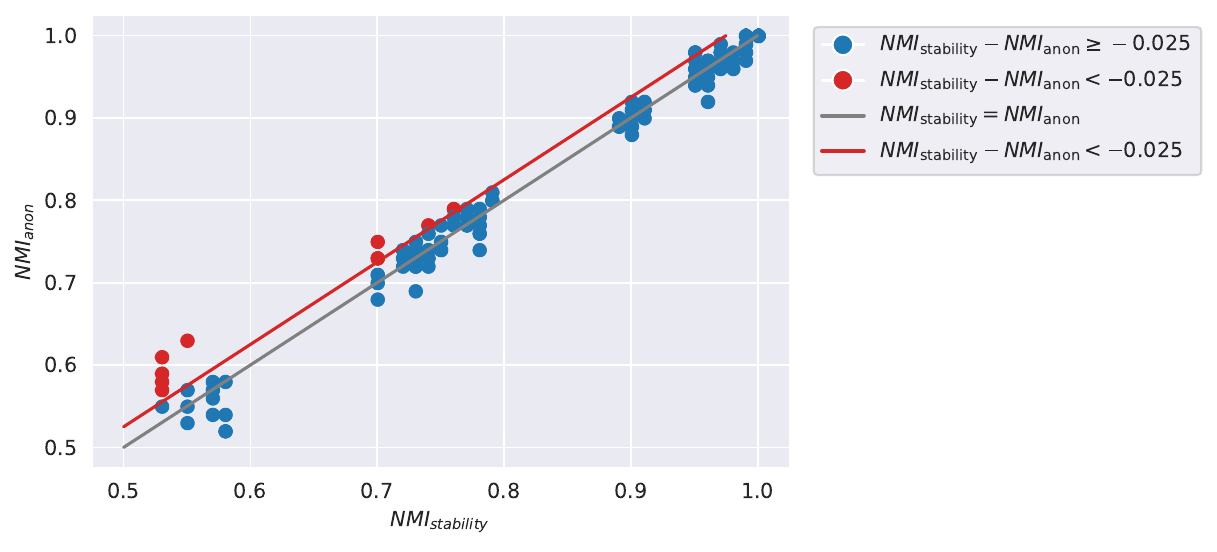}
    \caption{Values for $NMI_{stability}$ and $NMI_{anon}$ observed for real-world networks after applying \textsc{greedy} anonymization. Red dots indicate that $NMI_{anon}$ is 0.025 higher than $NMI_{stability}$. Blue dots indicate that $NMI_{anon}$ is less than 0.025 higher than $NMI_{stability}$.}
    \label{fig:util:comm}
\end{figure}

\end{document}